\newcommand{\ignore}[1]{}
\newcommand{\tuple}[1]{\langle #1 \rangle}
\newcommand{\lub}{\sqcup}
\newcommand{\glb}{\sqcap}
\newcommand{\true}{\textit{true}}
\newcommand{\false}{\textit{false}}
\newcommand{\Strand}{\textsc{Strand}\xspace}
\newcommand{\A}{\mathcal{A}}
\newcommand{\B}{\mathcal{B}}
\newcommand{\AEL}{\A_\text{el}}
\newcommand{\yblank}{-}
\newcommand{\blank}{\underline{b}}
\newcommand{\Lval}{L_{\text{v}}}
\newcommand{\next}{\texttt{next}}
\newcommand{\Loc}{\mathit{Loc}}
\newcommand{\PV}{\mathit{PV}}
\newcommand{\dom}{\mathbb{D}}
\newcommand{\DV}{\mathit{DV}}
\newcommand{\data}{\texttt{data}}
\newcommand{\nil}{{\texttt{nil}}}
\newcommand{\pc}{\mathit{pc}}
\newcommand{\heap}{H}
\newcommand{\pval}{\mathit{pval}}
\newcommand{\dval}{\mathit{dval}}
\newcommand{\PC}{\mathit{PC}}
\newcommand{\dirty}{\mathit{dirty}}
\newcommand{\run}{\rho}
\newcommand{\QSDA}{\mbox{\sc QSDA}}
\newcommand{\EQSDA}{\mbox{\sc EQSDA}}
\newcommand{\F}{\mathcal{F}}
\newcommand{\HC}{\mathcal{H}} 
\begin{document}

\title{Quantified Data Automata on Skinny Trees:\\
an Abstract Domain for Lists}

\author{
Pranav Garg\inst{1}, P. Madhusudan\inst{1}, Gennaro Parlato\inst{2}}

\institute{University of Illinois at Urbana-Champaign, USA\\
\and University of Southampton, UK}

\maketitle

\begin{abstract}
We propose a new approach to heap analysis through an abstract domain
of automata, called \emph{automatic shapes}. The abstract domain uses a particular kind of automata, called \emph{quantified data automata on skinny trees} (\QSDA s), that allows to define universally quantified properties of singly-linked lists.
To ensure convergence of the abstract fixed-point computation, we introduce a sub-class of \QSDA s called elastic \QSDA s, which also form an abstract domain.
We evaluate our approach on several list manipulating programs and we show that the proposed domain is powerful enough to prove a large class of these programs correct.
\end{abstract}

\section{Introduction}

The abstract analysis of heap structures is an important problem in program verification as dynamically evolving heap is ubiquitous in modern programming, either in terms of low level pointer manipulation or in object-oriented programming.
Abstract analysis of the heap is hard because abstractions need to represent the heap that is of unbounded size, and must capture both the \emph{structure} of the heap as well as the unbounded \emph{data} stored in the heap. While several data-domains have been investigated for data stored in static variables, the analysis of unbounded structure and unbounded data that a heap contains has been less satisfactory. The primary abstraction that has been investigated is the rich work on \emph{shape analysis}~\cite{shapeanalysis}. However, unlike abstractions for data-domains (like intervals, octagons, polyhedra, etc.), shape analysis requires carefully chosen \emph{instrumentation} predicates to be given by the user, and often are particular to the program that is being verified. Shape analysis techniques typically \emph{merge} all nodes that satisfy the same unary predicate, achieving finiteness of the abstract domain, and interpret the other predicates using a 3-valued (must, must not, may) abstraction. Moreover, these instrumentation predicates often require to be encoded in particular ways
(for example, capturing binary predicates as particular kinds of unary predicates) so as to not lose precision.

For instance, consider a sorting algorithm that has an invariant of the form:\\
$~~~~~~~~~~~~~~~~~~~~~~~\forall x, y. \left(~(x \rightarrow_\textit{\next}^* y \wedge y \rightarrow_\textit{\next}^* i) \Rightarrow d(x) \leq d(y)~\right)$\\
which says that the sub-list before pointer $i$ is sorted.
In order to achieve a shape-analysis algorithm that discovers this invariant
 (i.e., captures this invariant precisely during the analysis),
we typically need instrumentation predicates such as $p(z) = z \rightarrow_\textit{\next}^* i$,
$s(x) = \forall y. ((x \rightarrow_\textit{\next}^* y \wedge  y \rightarrow_\textit{\next}^* i) \Rightarrow d(x) \leq d(y))$, etc.
The predicate $s(x)$ says that the element that is at $x$ is less than
or equal to the data stored in every cell between $x$ and $i$.
These instrumentation predicates are clearly too dependent on the precise program and property
 being verified.

In this paper, we investigate an abstract domain for heaps that works \emph{without
user-defined instrumentation predicates} (except we require that the user fix an
abstract domain for data, like octagons, for comparing data elements).

We propose a radically new approach to heap analysis through an abstract domain
of automata, called \emph{automatic shapes} (automatic because we use automata).
The abstract domain uses a particular kind of automata, called \emph{quantified data automata}, that define, logically, universally quantified properties of heap structures. In this paper, we restrict our attention to heap structures that
have only \emph{one pointer field}; our analysis is hence one that can be used to analyze
properties of heaps containing lists, with possible aliasing (merging) of them,
especially at intermediate stages in the program. One-pointer heaps can be viewed
as \emph{skinny trees} (trees where the number of branching nodes is bounded).

Automata, in general, are classical ways to capture an infinite set of objects using
finite means. A class of (regular) skinny trees can hence be represented using tree
automata, capturing the structure of the heap. While similar ideas have been
explored before in the literature~\cite{forest-automata}, our main aim is to also represent properties of
the \emph{data} stored in the heap, building automata that can express
universally quantified properties on lists, in particular those
of the form\\\newline
$~~~~~~~~~~~~~~~~~~~~~\bigwedge_i \forall \overline{x}. \left(\textit{Guard}_i(\overline{p},\overline{x}) \Rightarrow \textit{Data}_i(d(\overline{p}), d(\overline{x}))\right)$ \\\newline
where $\overline{p}$ is the set of static pointer variables in the program.
The $\textit{Guard}_i$ formulas express structural constraints on the quantified variables
and the pointer variables, while the $\textit{Data}_i$ formulas express properties about the
data stored at the nodes pointed to by these pointers. In this paper, we investigate
an abstract domain that can infer such quantified properties, parameterized by an abstract
numerical domain ${\cal F}_d$ for the data formulas and by the number of quantified variables $\overline{x}$.

The salient aspect of the automatic shapes that we build is that (a) there is no requirement
from the user to define instrumentation predicates for the structural $\textit{Guard}$
formulas; (b) since the abstraction will not be done by merging unary predicates and since
the automata can define how data stored at \emph{multiple} locations on the heap are related,
there is no need for the user to define carefully crafted unary predicates that relate structure and data
(e.g., like the unary predicate $s(x)$ defined above that says that the location $x$ is sorted with respect
to all successive locations that come after $x$ but before $i$). Despite this lack of help
from the user, we show how our abstract domain can infer properties of a large number
of list-manipulating programs adequately to prove interesting quantified properties.

The crux of our approach is to use a class of automata, called quantified data automata on skinny trees (\QSDA),
to express a class of single-pointer heap structures and the data contained in them. \QSDA s read skinny
trees with data along with \emph{all} possible valuations of the quantified variables, and for each of them
check whether the data stored in these locations (and the locations pointed to by pointer variables in the
program) relate in particular ways defined by the abstract data-domain ${\cal F}_d$. We show that the
class of \QSDA s (over a data-domain ${\cal F}_d$ and a set of variable $\overline{x}$) forms an abstract
data domain lattice. Along with the natural concretization and abstraction relations, this class forms
a Galois connection with respect to the class of concrete single-pointer heap data structures.

We further show, for a simple heap-manipulating programming language, that we can define an abstract
post operator over the abstract domain of \QSDA s. This abstract post preserves the structural aspects of the
heap \emph{precisely} (as \QSDA s can have an arbitrary number of states to capture the evolution of the
program) and that it soundly abstracts the quantified data properties. The abstract post is nontrivial
to define and show it effective as it requires automata-theoretic operations that need to simultaneously
preserve structure as well as data properties; this forms the hardest technical aspect of our paper.
We thus obtain an effective abstract interpretation using the domain of \QSDA s.

Traditionally, in order to handle loops and reach termination, abstract domains require some form of widening.
Our notion of widening is \emph{directed by decidability considerations}. Assume that the programmer computes
a \QSDA\ as an invariant for the program at a particular point, where there is an assertion expressed as
a quantified property $p$ over lists (such as  ``the list pointed to by \emph{head} is sorted'').
In order to verify that the abstraction proves the assertion, we will have to check if the language of lists
accepted by the \QSDA\ is contained in the language of lists that satisfy the property $p$. However, this is in
general \emph{undecidable}. Our aim is to \emph{overapproximate} the \QSDA\ into a larger language accepted
by a particular kind of data automata, called  {\em elastic} \QSDA\ (\EQSDA) for which this inclusion problem is
decidable (for an appropriately chosen language for expressing assertions).

This \emph{elastification} will in fact serve as the basis for widening as well, as there are only a \emph{finite} number of
elastic \QSDA s that express structural properties, discounting the data-formulas. Consequently, we
can combine the elastification procedure (which overapproximates a \QSDA\ into an elastic \QSDA) and
widening over the numerical domain for the data in order to obtain widening procedures that can be
used to accelerate the computation for loops. In fact, the domain of \EQSDA s can be seen as an abstract
domain, and there is a natural abstract interpretation between \QSDA s and \EQSDA s, where the \EQSDA s permit
widening procedures. We show a unique elastification theorem that shows
that for any \QSDA, there is a unique elastic \QSDA\ that over-approximates it. This elastification is in fact
the abstract map $\alpha$ that connects \QSDA s with \EQSDA s (the $\gamma$ map being identity, as \EQSDA s are also
\QSDA s).

We also show that \EQSDA\ properties over lists can be translated to a decidable fragment of the logic {\sc Strand}~\cite{popl11} over lists, and hence
inclusion checking an elastic \QSDA\ with respect to any assertion that is also written using the decidable
sublogic of {\sc Strand} over lists is decidable. The notion of \QSDA s and elasticity are extensions
of recent work in~\cite{CAVQDA}, where such notions were developed for \emph{words}
(as opposed to trees) and where the automata were used for \emph{learning} invariants from examples and counter-examples.

We implement our abstract domain and transformers and show, using a suite of list-manipulating programs,
that our abstract interpretation is able to prove the naturally required (universally-quantified) properties of these programs.
While
several earlier approaches (such as shape analysis) can tackle the correctness of these programs as well, our abstract
analysis is able to do this \emph{without} requiring program-specific help from the user (for example, in terms of
instrumentation predicates in shape analysis, and in terms of guard patterns in the work by Bouajjani et al~\cite{celia}).

%
\paragraph{\bf Related Work.}

Shape analysis~\cite{shapeanalysis} is the one of the most well-known technique for synthesizing invariants about dynamically evolving heaps. However, shape analysis requires user-provided instrumentation predicates which are often too particular to the program being verified. Hence coming up with these instrumentation predicates is not an easy task.
In recent work~\cite{sas10,chang-rival,celia,gulwani08}, several abstract domains have been explored which combine the shape and the data constraints.
Though some of these domains~\cite{sas10,chang-rival} can handle heap structures more complex than singly-linked lists, all these domains require the user to provide a set of data predicates~\cite{gulwani08} or a set of structural guard patterns~\cite{celia} or predicates over both the structure and the data constraints~\cite{sas10,chang-rival}.
In contrast, the only assistance our technique requires from the user is specifying the number of universally quantified variables.

For singly-linked lists,~\cite{rama} introduces a family of abstractions based on a set of instrumentation predicates which track uninterrupted list segments. However these abstractions only handle structural properties and not the more-complex quantified data properties.
Several separation logic based shape analysis techniques have also been developed over the years~\cite{distefano,guo,berdine,slayer}. But they too mostly handle only the shape properties (structure) of the heap.


Our automaton model for representing quantified invariants over lists is inspired by the decidable fragment of \Strand~\cite{popl11} and can track invariants with guard constraints of the form $y \leq t$ or $t \leq y$ for a universal variable $y$ and some term $t$. These structural constraints on the guard are very similar to array partitions in~\cite{gopan,halbwachs-pldi08,cousot-logozzo}. However, our automata model is more general. For instance, none of these related works can handle sortedness of arrays which requires quantification over more than one variable.

Techniques based on \emph{Craig's interpolation} have recently emerged as an orthrogonal way for synthesizing quantified invariants over arrays and lists~\cite{mcmillan06,mcmillan08,natasha,podelski}.
These methods use different heuristics like term abstraction~\cite{podelski} or introduction of existential ghost variables~\cite{natasha} or finding interpolants over a restricted language~\cite{mcmillan06,mcmillan08} to ensure the convergence of the interpolant from a small number of spurious counter-examples.
The shape analysis proposed in~\cite{podelski07} is also counter-example driven.~\cite{podelski07} requires certain quantified predicates to be provided by the user. Given these predicates, it uses a CEGAR-loop for incrementally improving the precision of the abstract transformer and also discovering new predicates on the heap objects that are part of the invariant.

Automata based abstract interpretation has been explored in the past~\cite{forest-automata} for inferring shape properties about the heap. However, in this paper we are interested in strictly-richer universally quantified properties on the data stored in the heap.~\cite{streaming-transducer} introduces a streaming transducer model for algorithmic verification of single-pass list-processing programs. However the transducer model severely constrains the class of programs it can handle; for example,~\cite{streaming-transducer} disallows repeated or nested list traversals which are required in sorting routines, etc.

In this paper we introduce a class of automata called quantified skinny-tree data automata (QSDA) to capture universally quantified properties over skinny-trees. The QSDA model is an extension of recent work in~\cite{CAVQDA} where a similar automata model was introduced for words (as opposed to trees). Also,
the automata model in~\cite{CAVQDA} was parameterized by a \emph{finite} set of data formulas and was used for \emph{learning} invariants from examples and counter-examples. In contrast, we extend the automata in~\cite{CAVQDA} to be instantiated with a (possibly-infinite) abstract domain over data formulas and develop a theory of abstract interpretation over QSDAs.

\section{Programs Manipulating Heap and Data}

We consider sequential programs manipulating acyclic singly-linked data structures. A {\em heap structure} is composed of locations (also called nodes). Each location is endowed  with a {\em pointer field} $\next$ that points to another location or it is undefined,  and a {\em data field} called $\data$ that takes values from a potentially infinite domain $\dom$ (i.e. the set of integers). For simplicity we assume a special location, called $\dirty$, that models an un-allocated memory space. We assume that the $\next$ pointer field of $\dirty$ is undefined. Besides the heap structure, a program also has a finite number of {\em pointer variables} each pointing to a location in the heap structure, and a finite number of {\em data variables} over $\dom$. In our programming language  we do not have procedure calls, and we handle non-recursive procedures calls by inlining the code at call points. In the rest of the section we formally define the syntax and semantics of these programs.

\begin{wrapfigure}[9]{r}[20pt]{83mm}
\vspace{-1.2cm}
  \centering
\scriptsize
  \begin{grammar}
    <prgm> ::=  pointer $p_1,\ldots,p_k$; data $d_1,\ldots,d_\ell$;  <pc\_stmt>$^+$
\vspace{-0.2cm}

    <pc\_stmt> ::= $pc:$ <stmt>;
\vspace{-0.2cm}

      <stmt> ::= <ctrl\_stmt> | <heap\_stmt>
\vspace{-0.2cm}


     <ctrl\_stmt> ::= $d_i:=$<data\_expr> $\mid$ {\tt skip}
 $\mid$ {\tt assume}(<pred>) 
\alt {\tt if} <pred> {\tt then} <pc\_stmt>$^+$ {\tt else} <pc\_stmt>$^+$ {\tt fi}
\alt {\tt while} <pred> {\tt do} <pc\_stmt>$^+$ {\tt od}

\vspace{-0.2cm}

     <heap\_stmt> ::= \mbox{$\texttt{new } p_i$} $\mid$ \mbox{$p_i := \nil$}  $\mid$  \mbox{$p_i := p_j$} \alt \mbox{$p_i := p_j\rightarrow\next$} $\mid$ \mbox{$p_i\rightarrow\next := \nil$} $\mid$ \mbox{$p_i\rightarrow\next := p_j$} \alt $p_i\rightarrow\data :=<data\_expr>$

\end{grammar}
\vspace{-0.5cm}

\caption{Simple programming language.}
\label{grammar}
\end{wrapfigure}

\ignore{

\begin{wrapfigure}[h]{r}[20pt]{83mm}
  \begin{center}
\vspace*{-35pt}

{\bf \scriptsize
  \begin{grammar}
    <prgm> ::=  pointer $p_1,\ldots,p_k$; data $d_1,\ldots,d_\ell$;  <pc\_stmt>$^+$

\vspace*{-5pt}
    <pc\_stmt> ::= $pc:$ <stmt>;

  \vspace*{-5pt}
      <stmt> ::= <ctrl\_stmt> | <heap\_stmt>

  \vspace*{-5pt}

     <ctrl\_stmt> ::= $d_i:=$<data\_expr> $\mid$ {\tt skip}
    $\mid$ {\tt assume}(<pred>) 
\alt {\tt if} <pred> {\tt then} <pc\_stmt>$^+$ {\tt else} <pc\_stmt>$^+$ {\tt fi}
\alt {\tt while} <pred> {\tt do} <pc\_stmt>$^+$ {\tt od}

  \vspace*{-5pt}

     <heap\_stmt> ::= \mbox{$\texttt{new } p_i$} $\mid$ \mbox{$p_i := \nil$}  $\mid$  \mbox{$p_i := p_j$} \alt \mbox{$p_i := p_j\rightarrow\next$} $\mid$ \mbox{$p_i\rightarrow\next := \nil$} $\mid$ \mbox{$p_i\rightarrow\next := p_j$} \alt $p_i\rightarrow\data :=<data\_expr>$



\end{grammar}
}

  \end{center}

\end{wrapfigure}
}

\paragraph{Syntax.}
The syntax of programs is defined by the BNF grammar of Figure~\ref{grammar}.
A program starts with the declaration of pointer variables among which one called $\nil$, followed by a declaration of data variables. Data variables range over a potentially infinite data domain $\dom$. We assume a language of data expressions built from data variables and terms of the form
$p_i\rightarrow \data$ (with $p_i\not=\nil$) using operations over $\dom$.
Predicates in our language are either data predicates built from predicates over $\dom$ or structural predicates concerning the heap built from atoms of the form $p_i == p_j$, $p_i\rightarrow\next == p_j$, and  $p_i\rightarrow^*\next\ == p_j$, for some $i,j\in[1,k]$. Thereafter, there is a non-empty list of labelled statements of the form
$\mathit{pc}\!:\! \langle \mathit{stmt}\rangle$ where $\mathit{pc}$ is the {\em program counter} and $\langle\mathit{stmt}\rangle$ defines a language of either C-like statements or statements which modify the heap.  We do not have an explicit statement to {\em free} locations of the heap: when a location is no longer reachable from any location pointed by a pointer variable we assume that it automatically disappears from the memory. For a program $P$, we denote with $\PC$ the set of all program counters of $P$ statements.
Figure~\ref{fig:prog}(a) shows the code for program \emph{sorted list-insert} which is a running example in the paper. The program inserts a \emph{key} into the sorted list pointed to by variable \emph{head}.

\paragraph{Semantics.} A {\em configuration} $C$ of a program $P$ with set of pointer variables $\PV$ and data variables $\DV$ is a tuple $\tuple{\pc, \heap, \pval, \dval}$ where
\begin{itemize}
\item $\pc\in\PC$ is the program counter of the next statement to be executed;

\item $H$ is a {\em heap configuration} represented by a tuple $(\Loc, \next, \data)$ where (1) $\Loc$ is a finite set of heap locations containing a special element called $\dirty$, (2) $\next: \Loc \mapsto \Loc$ is a partial map defining an edge relation among locations such that the graph $(\Loc,\next)$ is acyclic, and (3) $\data:\Loc \mapsto \dom$ is a  map that associates each location of $\Loc$ with a data value in $\dom$;
\item $\pval:\PV\mapsto \Loc$ associates each pointer variable of $P$ with a location in $H$. If $\pval(p)=v$ we say that node $v$ is {\em pointed} by variable $p$. Furthermore, each node in $\Loc$ is reachable from a node pointed by a variable in $\PV$. There is no outgoing ($\next$) edge from location $\dirty$ and there is a $\next$ edge from the location pointed by $nil$ to $\dirty$;
\item $\dval:\DV \mapsto\dom$ is a valuation map for the data variables.
\end{itemize}

Figure~\ref{fig:prog}(b) graphically shows a progam configuration which is reachable at program counter $8$ of the program in Figure~\ref{fig:prog}(a) (as explained later we encode the data variable \emph{key} as a pointer variable in the heap configuration).
The {\em transition relation} of a program $P$, denoted $\xrightarrow{\mathit{stmt}}_P$ for each statement $\mathit{stmt}$ of $P$, is defined as usual.  The control-flow statements update the program counter, possibly depending on a predicate (condition).
The assignment statements update the variable valuation or the heap structure other than moving to the next program counter. A formal semantics of programs can be found in Appendix~\ref{semantics}.
Let us define the concrete transformer $F^\natural = \lambda \mathcal{C}. \{ \mathcal{C'} \mid \mathcal{C} \xrightarrow{\mathit{stmt}}_P \mathcal{C'} \}$. The concrete semantics of a program is given as the least fixed point of a set of equations of the form $\psi = F^\natural (\psi)$. 

To simplify the presentation of the paper, we assume that our programs do not have data variables. This restriction, indeed, does not reduce their expressiveness: we can always transform a program $P$ into an {\em equivalent} program $P'$ by translating each data variable $d$ into a pointer variable that will now point to a fresh node in the heap structure, in  which the value $d$ is now encoded by $d\rightarrow\data$. The node pointed by $d$ is not pointed by any other pointer, further, $d\rightarrow\next$ points to $\dirty$. Obviously, wherever $d$ is used in $P$ will now be replaced by $d\rightarrow\data$ in $P'$. 

\section{Quantified Skinny-Tree Data Automata} \label{sec:qda}
In this section we define {\em quantified skinny-tree data automata} (\QSDA s, for short), an accepting  mechanism  of program configurations (represented as special labelled trees) on which we can express properties of the form \\
$\bigwedge_i \forall y_1,\ldots,y_\ell.$ $Guard_i$ $\Rightarrow$ $Data_i$,
where variables $y_i$ range over the set of locations of the heap, $Guard_i$ represent quantifier-free structural constraints among the pointer variables and the universally quantified variables $y_i$, and $Data_i$ (called {\em data formulas}) are quantifier-free formulas that refer to the data stored at the locations pointed either by the universal variables $y_i$ or the pointer variables, and compare them using operators over the data domain.  In the rest of this section, we first define {\em heap skinny-trees} which are a suitable labelled tree encodings for program configurations; we then define {\em valuation trees} which are heap skinny-trees by adding to the labels an instantiation of the universal variables. {\em Quantified skinny-tree data automata} is a mechanism designed to recognize valuation trees. The {\em language} of a \QSDA\ is the set of all heap skinny-trees such that all valuation trees deriving from them are accepted by the \QSDA. Intuitively, the heap skinny-trees in the language defined by the \QSDA\ are all the program configurations that verify the formula $\bigwedge_i \forall y_1,\ldots,y_\ell. Guard_i \Rightarrow Data_i$.

Let $T$ be a tree. A node $u$ of $T$ is {\em branching} whenever $u$ has more than one child. For a given natural number $k$, $T$  is  $k$-{\em skinny} if it contains at most $k$ branching nodes.

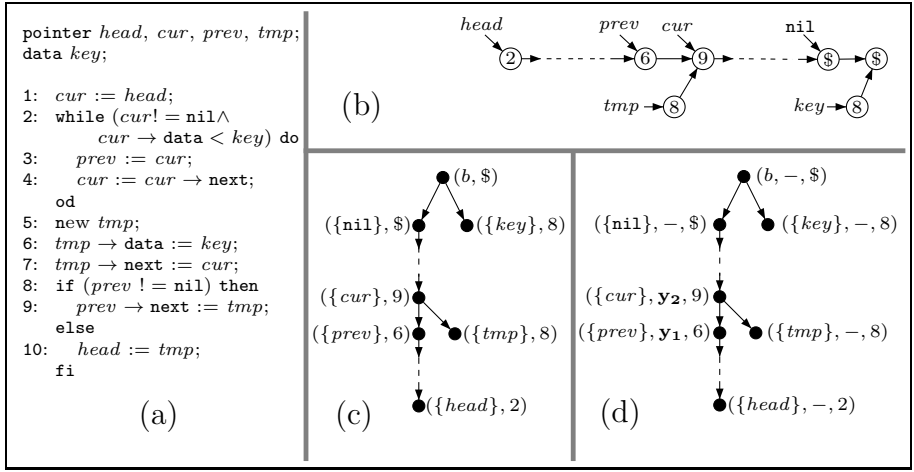
\begin{figure}[tb]
\framebox{
\scriptsize
\begin{minipage}{330pt}

\begin{multicols}{2}

{\tt pointer} $\mathit{head}$, $\mathit{cur}$, $\mathit{prev}$, $\mathit{tmp}$;\\
{\tt data} $\mathit{key}$;\\
\\
{\tt 1}:\hspace*{0.11truecm}  $\mathit{cur}$ := $\mathit{head}$;\\
{\tt 2}:\hspace*{0.13truecm}  {\tt while} ($cur != \nil \wedge$\\
\hspace*{1.0truecm}$cur\rightarrow\data < key$) {\tt do}\\
{\tt 3}:\hspace*{0.4truecm} $\mathit{prev}$ := $\mathit{cur}$;\\
{\tt 4}:\hspace*{0.4truecm} $\mathit{cur} := \mathit{cur}\rightarrow\next$;\\
\hspace*{0.34truecm} {\tt od}\\
{\tt 5}:\hspace*{0.13truecm} {\texttt new} $\mathit{tmp}$;\\
{\tt 6}:\hspace*{0.11truecm} $\mathit{tmp}\rightarrow\data$ := $\mathit{key}$;\\
{\tt 7}:\hspace*{0.11truecm} $\mathit{tmp}\rightarrow\next$ := $\mathit{cur}$;\\
{\tt 8}:\hspace*{0.13truecm} {\tt if} ($\mathit{prev}$ $!=$ $\nil$) {\tt then}\\
{\tt 9}:\hspace*{0.4truecm} $\mathit{prev}\rightarrow\next$ := $\mathit{tmp}$;\\
\hspace*{0.34truecm} {\tt else}\\
{\tt 10}: \hspace*{0.19truecm} $\mathit{head}$ := $\mathit{tmp}$;\\
\hspace*{0.34truecm} {\tt fi}\\

\columnbreak


\begin{picture}(177,49)(0,-49)
\setlength{\unitlength}{0.32mm}

\node[Nw=9.0,Nh=9.0,NLangle=130.0,NLdist=19.0,ilength=10.0,iangle=137.0,flength=8.0,Nmarks=if](n0)(16.0,-19.76){\scriptsize $\mathit{head}$}
\nodelabel[NLangle=0.0](n0){\scriptsize $2$}

\node[Nw=9.0,Nh=9.0,NLangle=130.0,NLdist=17.0,ilength=10.0,Nmarks=i](n1)(72.0,-19.76){\scriptsize $\mathit{prev}$}
\imark[ilength=10.0,iangle=137.0](n1)
\nodelabel[NLangle=0.0](n1){\scriptsize $6$}

\node[Nw=9.0,Nh=9.0,NLangle=130.0,NLdist=17.0,ilength=10.0,iangle=137.0,flength=8.0,Nmarks=if](n2)(96.0,-19.76){\scriptsize $\mathit{cur}$}
\nodelabel[NLangle=0.0](n2){\scriptsize $9$}

\node[Nw=9.0,Nh=9.0,NLangle=180.0,NLdist=22.0,Nmarks=i,ilength=8.0](n3)(84.0,-39.76){\scriptsize $\mathit{tmp}$}
\nodelabel[NLangle=0.0](n3){\scriptsize $8$}

\nodelabel[NLangle=211.0,NLdist=250.0](n3){\large (a)}
\nodelabel[NLangle=180.0,NLdist=130.0](n3){\large (b)}
\nodelabel[NLangle=224.0,NLdist=185.0](n3){\large (c)}
\nodelabel[NLangle=260.0,NLdist=130.0](n3){\large (d)}

\node[Nw=9.0,Nh=9.0,NLangle=130.0,NLdist=18.0,ilength=10.0,iangle=135.0,Nmarks=i](n4)(147.99,-20.0){\scriptsize $\nil$}
\imark[ilength=8.0](n4)
\nodelabel[NLangle=0.0](n4){\scriptsize $\$$}

\node[Nw=9.0,Nh=9.0,NLangle=0.0](n5)(167.99,-19.76){\scriptsize $\$$}
\nodelabel[NLdist=8.0](n5){}

\node[Nw=9.0,Nh=9.0,NLangle=180.0,NLdist=20.0,Nmarks=i,ilength=8.0](n6)(159.99,-39.76){\scriptsize $\mathit{key}$}
\nodelabel[NLangle=0.0](n6){\scriptsize $8$}

\drawedge[dash={2.0 2.0 2.0 3.0}{0.0},AHnb=0,sxo=15.0,exo=-15.0](n0,n1){ }

\drawedge(n1,n2){}

\drawedge[dash={2.0 2.0 2.0 3.0}{0.0},AHnb=0,sxo=15.0,exo=-15.0](n2,n4){ }

\drawedge(n4,n5){}

\drawedge(n6,n5){}

\drawedge(n3,n2){}

\drawline[AHnb=0, linegray=0.5, linewidth=2](-69,0)(-69,-187)

\drawline[AHnb=0, linegray=0.5, linewidth=2](42,-58)(42,-187)

\drawline[AHnb=0, linegray=0.5, linewidth=2](-70,-58)(180,-58)

\end{picture}


\begin{picture}(35,5)(15,-37)
\setlength{\unitlength}{0.4mm}

\node[Nfill=y,fillcolor=black,NLangle=0.0,NLdist=11.0,Nw=4.0,Nh=4.0,Nmr=2.0](n0)(27.76,-12.0){\scriptsize $(b,\$)$}

\node[Nfill=y,fillcolor=black,NLangle=180.0,NLdist=16.9,fangle=270.0,flength=6.0,Nmarks=f,Nw=4.0,Nh=4.0,Nmr=2.0](n1)(19.76,-28.0){\scriptsize $(\{\nil\},\$)$}

\drawedge(n0,n1){}

\node[Nfill=y,fillcolor=black,NLangle=0.0,NLdist=18.0,Nw=4.0,Nh=4.0,Nmr=2.0](n2)(35.76,-28.0){\scriptsize $(\{\mathit{key}\},8)$}

\drawedge(n0,n2){}

\node[Nfill=y,fillcolor=black,NLangle=180.0,NLdist=17.8,ilength=6.0,iangle=90.0,Nmarks=i,Nw=4.0,Nh=4.0,Nmr=2.0](n3)(19.76,-52.0){\scriptsize $(\{\mathit{cur}\},9)$}

\drawedge[dash={2.0 2.0 2.0 3.0}{0.0},AHnb=0,syo=-7.0,eyo=7.0](n1,n3){ }

\node[Nfill=y,fillcolor=black,NLangle=0.0,NLdist=18.5,Nw=4.0,Nh=4.0,Nmr=2.0](n5)(31.76,-64.0){\scriptsize $(\{\mathit{tmp}\},8)$}

\node[Nfill=y,fillcolor=black,NLangle=180.0,NLdist=19.0,flength=6.0,fangle=270.0,Nmarks=f,Nw=4.0,Nh=4.0,Nmr=2.0](n7)(19.76,-64.0){\scriptsize $(\{\mathit{prev}\},6)$}

\drawedge(n3,n5){}

\drawedge(n3,n7){}

\node[Nfill=y,fillcolor=black,NLangle=0.0,NLdist=19.3,ilength=6.0,iangle=90.07,Nmarks=i,Nw=4.0,Nh=4.0,Nmr=2.0](n9)(19.76,-88.0){\scriptsize $(\{\mathit{head}\},2)$}

\drawedge[dash={2.0 2.0 2.0 3.0}{0.0},AHnb=0,syo=-8.0,eyo=8.0](n7,n9){ }

\end{picture}


\begin{picture}(35,5)(-25,-42.5)
\setlength{\unitlength}{0.4mm}

\node[Nfill=y,fillcolor=black,NLangle=0.0,NLdist=16.0,Nw=4.0,Nh=4.0,Nmr=2.0](n0)(27.76,-12.0){\scriptsize $(b,-,\$)$}

\node[Nfill=y,fillcolor=black,NLangle=180.0,NLdist=23.0,fangle=270.0,flength=6.0,Nmarks=f,Nw=4.0,Nh=4.0,Nmr=2.0](n1)(19.76,-28.0){\scriptsize $(\{\nil\},-,\$)$}

\drawedge(n0,n1){}

\node[Nfill=y,fillcolor=black,NLangle=0.0,NLdist=24.0,Nw=4.0,Nh=4.0,Nmr=2.0](n2)(35.76,-28.0){\scriptsize $(\{\mathit{key}\},-,8)$}

\drawedge(n0,n2){}

\node[Nfill=y,fillcolor=black,NLangle=180.0,NLdist=24.0,ilength=6.0,iangle=90.0,Nmarks=i,Nw=4.0,Nh=4.0,Nmr=2.0](n3)(19.76,-52.0){\scriptsize $(\{\mathit{cur}\},{\bf y_2},9)$}

\drawedge[dash={2.0 2.0 2.0 3.0}{0.0},AHnb=0,syo=-7.0,eyo=7.0](n1,n3){ }

\node[Nfill=y,fillcolor=black,NLangle=0.0,NLdist=24.0,Nw=4.0,Nh=4.0,Nmr=2.0](n5)(31.76,-64.0){\scriptsize $(\{\mathit{tmp}\},-,8)$}

\node[Nfill=y,fillcolor=black,NLangle=180.0,NLdist=25.0,flength=6.0,fangle=270.0,Nmarks=f,Nw=4.0,Nh=4.0,Nmr=2.0](n7)(19.76,-64.0){\scriptsize $(\{\mathit{prev}\},{\bf y_1},6)$}

\drawedge(n3,n5){}

\drawedge(n3,n7){}

\node[Nfill=y,fillcolor=black,NLangle=0.0,NLdist=24.0,ilength=6.0,iangle=90.07,Nmarks=i,Nw=4.0,Nh=4.0,Nmr=2.0](n9)(19.76,-88.0){\scriptsize $(\{\mathit{head}\},-,2)$}

\drawedge[dash={2.0 2.0 2.0 3.0}{0.0},AHnb=0,syo=-8.0,eyo=8.0](n7,n9){ }

\end{picture}

\end{multicols}
\end{minipage}
}
\caption{ {\bf (a)} {\em sorted list-insert} program $P$; {\bf (b)} shows  a $P$ configuration at program counter $8$; {\bf (c)} is the heap skinny-tree associated to (b); {\bf (d)} is a valuation tree of (c).}\label{fig:prog}
\end{figure}

\vspace{-0.1cm}
\paragraph{\bf Heap skinny-trees.} Let $\PV$ be the set of pointer variables of a program $P$ and $\Sigma=2^\PV$ (let us denote the empty set with a blank symbol $b$).
 We associate with each $P$ configuration $C=\tuple{\pc, \heap, \pval, \dval}$ with $H=(\Loc, \next, \data)$, the $(\Sigma\times \dom)$-labelled graph $\HC=(T,\lambda)$ whose nodes are those of $\Loc$, and where $(u,v)$ is an edge of $T$ iff $\next(v)=u$ (essentially we reverse all $\next$ edges). From the definition of program configurations it is easy to see that $T$ is a $k$-skinny tree where $k=|\PV|$. The labelling function $\lambda:\Loc \mapsto(\Sigma\times \dom)$ is defined as follows: for every $u\in \Loc$, $\lambda(u)=(S,d)$ where $S$ is the set of all pointer variables $p$ such that $\pval(p)=u$, and $d=\data(u)$. We call $\HC$ the {\em heap skinny-tree} of $C$.

In general heap skinny-trees can be logically characterized as follows.

\begin{definition}[{\sc Heap Skinny-Trees}]
A {\em heap skinny-tree} over a set of pointer variables $\PV$ (with {\em \nil}$\in\PV $) and data domain $\dom$, is a $(\Sigma\times\dom)$-labelled $k$-skinny tree $(T,\lambda)$ with $\Sigma=2^\PV$ and $k=|\PV|$, such that:
\vspace{-0.1cm}
\begin{itemize}
\item for every leaf $v$ of $T$, $\lambda(v)=(S,d)$ where $S\not=\emptyset$;

\item for every pointer variable $p\in\PV$, there is a unique node $v$ of $T$ such that $\lambda(v)=(S,d)$ with $p\in S$;
\item the node $v$ of $T$ such that  $\lambda(v)=(S,d)$ and {\em \nil}$\in S$ is one of the childen of the root of $T$. \qed
\end{itemize}
\end{definition}

Figure~\ref{fig:prog}(c) shows the heap skinny-tree corresponding to the program configuration of Figure~\ref{fig:prog}(b). Note that though the program handles a singly linked list, in the intermediate operations we can get trees. However they are special trees with bounded branching.
%
This example illustrates that program configurations of list manipulating programs naturally correspond to heap skinny-trees. It also motivates why we need to extend automata over words introduced in~\cite{CAVQDA} to quantified data automata over skinny-trees.
We now define valuation trees.

\paragraph{\bf Valuation trees.} Let us fix a finite set of {\em universal} variables $Y$. A {\em valuation tree} over $Y$ of a heap skinny-tree $\HC$ is a $(\Sigma \times (Y \cup
\{\yblank\}) \times \dom)$-labelled tree  obtained from $\HC$ by adding an element from the set $Y \cup\{\yblank\}$ to the label, in which every element in $Y$ occurs exactly once in the tree. We use  the symbol `$\yblank$' at a node $v$ if there is no variable from $Y$ labelling  $v$.
A valuation tree corresponding to the heap skinny-tree of Figure~\ref{fig:prog}(c) is shown in Figure~\ref{fig:prog}(d).


\begin{definition}[{\sc Quantified Skinny-Tree Data Automata}]
A {\em quantified skinny-tree data automaton} (\QSDA) over a set of pointer variables $\PV$ (with $|\PV|=k$), a data domain $\dom$, a set of universal variables $Y$, and a set of data formulas $F$ over $\dom$, is a tuple $\A = (Q, \Pi, \Delta, \mathcal{T}, f)$ where:
\begin{itemize}
\item $Q$ is a finite set of states;
\item $\Pi$ = $\Sigma \times \widehat{Y}$ is the alphabet where $\Sigma = 2^{PV}$ and $\widehat{Y} = Y \cup \{-\}$;
\item $\Delta= (\Delta_0, \Delta_1, \ldots, \Delta_k)$ where, for every $i\in[1,k]$,  $\Delta_i : (Q^i \times \Pi) \mapsto Q$ defines a (deterministic) transition relation; 
\item $\mathcal{T}: Q \rightarrow 2^{PV \cup Y} $ is the type associated with every state $q \in Q$; 
\item $f: Q \mapsto F$ is a final-evaluation.\qed 
\end{itemize}
\end{definition}

A valuation tree $(T,\lambda)$ over $Y$ of a program $P$, where $N$ is the set of nodes of $T$, is {\em recognized}  by a \QSDA\ $\A$ if there exists a node-labelling map $\run:N\mapsto Q$ that associates each node of $T$ with a state in $Q$ such that for each node $t$ of $T$ with $\lambda(t)=(S,y,d)$ the following holds (here $\lambda'(t) = (S,y)$ is obtained by projecting out the data values from $\lambda(t)$):
\begin{itemize}
\item if $t$ is a leaf then $\Delta_0(\lambda'(t))=\run(t)$ and $(\mathcal{T}(\run(t))\cap \PV)\not=\emptyset$.

\item if $t$ is an internal node, with sequence of children $t_1,t_2, \ldots, t_i$ then
\begin{itemize}
\item $\Delta_i\left(\, (\run(t_1),\ldots, \run(t_i) ),\, \lambda'(t)\, \right)=\run(t)$;

 \item $S \cap \mathcal{T}(\run(t_j))=\emptyset$ and  $y\notin \mathcal{T}(\run(t_j))$, for every $j\in[1,i]$;
 \item $\mathcal{T}(\run(t))=S \cup \{y\} \cup \left(\bigcup_{j\in[1,i]} \mathcal{T}(\run(t_j))\right)$ if $y \in Y$. Otherwise if $y = -$ then $\mathcal{T}(\run(t))=S \cup \left(\bigcup_{j\in[1,i]} \mathcal{T}(\run(t_j))\right)$.

\end{itemize}
\item if $t$ is the root then $\mathcal{T}(\run(t))=(\PV\cup Y)$ and the formula $f(\run(t))$, obtained by replacing all occurrences of terms $y \rightarrow \data$ and $p \rightarrow \data$ with their corresponding data values in the valuation tree,
holds true.

\end{itemize}

A \QSDA\ can be thought as a \emph{register} automaton that reads a valuation tree in a bottom-up fashion and stores the data at the positions evaluated for $Y$ and locations pointed by elements in $\PV$, and checks whether the formula associated to the state at the root  holds true by instantiating the data values in the formula with those stored in the registers. Furthermore, the role of  map $\mathcal{T}$ is that of enforcing that each element in $\PV\cup Y$ occurs exactly once in the valuation tree.

A \QSDA\ $\A$ {\em accepts} a heap skinny-tree $\HC$ if $\A$ recognizes all valuation trees of $\HC$. The {\em language} accepted by $\A$, denoted $L(\A)$, is the set of all heap skinny-trees $\HC$ accepted by $\A$. A language $\cal L$ of heap skinny-trees is {\em regular} if there is a \QSDA\ $\A$ such that ${\cal L}=L(\A)$.
Similarly, a language $\cal L$ of valuation trees is {\em regular} if there is a \QSDA\ $\A$ such that ${\cal L}=\Lval(\A)$, where $\Lval(\A)$ is the set of all valuation trees recognized by $\A$.



QSDAs are a generalization of {\em quantified data automata} introduced in~\cite{CAVQDA} that handle only lists as opposed to QSDAs that handle skinny-trees.
We now introduce various characterizations of \QSDA s which are used later in the paper.

\paragraph{\bf Unique minimal \QSDA.} In~\cite{CAVQDA} the authors show that it is not possible to have a unique minimal quantified data automaton over words (with respect to the number of states) which accepts a given language over linear heap configurations. The proof gives a set of heap configurations over a linear heap-structure which is accepted by two different automata having the same number of states. Since QSDAs are a generalization of quantified data automata, the same counter-example language holds for \QSDA s.
However, under the assumption that all data formulas in $F$ are pairwise non-equivalent, there does exist a canonical automaton on the level of valuation trees. In \cite{CAVQDA}, the authors prove the canonicity of quantified data automata, and their result extends to \QSDA s in a straight forward manner.

\begin{theorem}
\label{thm-canonical}
For each \QSDA\ $\A$ there is a unique minimal \QSDA\ $\A'$ such that
$\Lval(\A)=\Lval(\A')$.
\end{theorem}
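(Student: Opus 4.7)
The plan is to lift the Myhill--Nerode style canonicity argument for deterministic bottom-up tree automata, as carried out in \cite{CAVQDA} for the word case, to the setting of skinny-tree valuation trees. The construction will identify the states of the minimal \QSDA\ with equivalence classes of valuation subtrees under a suitable congruence, and then read off the transition relation, the type map $\mathcal{T}$, and the final-evaluation $f$ from these classes.

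First I would define, relative to a fixed \QSDA\ $\A$, an equivalence relation $\equiv_\A$ on valuation subtrees (i.e., valuation trees with a designated root but possibly missing the global constraint that every $y \in Y$ appears). Two valuation subtrees $s_1, s_2$ are equivalent iff (i) the set of $\PV \cup Y$ labels appearing in $s_1$ equals that of $s_2$ (so that the type $\mathcal{T}$ is determined), and (ii) for every valuation context $C[\cdot]$, the valuation tree $C[s_1]$ lies in $\Lval(\A)$ exactly when $C[s_2]$ does. Standard arguments show $\equiv_\A$ is a congruence with respect to the $k+1$ tree-constructor operations corresponding to $\Delta_0, \Delta_1, \ldots, \Delta_k$, and that its index is bounded by the number of states of $\A$ (since states of $\A$ refine $\equiv_\A$).

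Next I would build the canonical automaton $\A'$ by taking states to be $\equiv_\A$-classes reachable from valuation subtrees, defining each $\Delta'_i$ by the obvious representative-independent rule, setting $\mathcal{T}'([s])$ to be the set of pointer and universal variables occurring in $s$ (well-defined by condition (i)), and defining $f'([s])$ at a root-typed class as follows: because $[s]$ is a class, all valuations of $\PV \cup Y$ compatible with its root position produce a fixed truth value for the formula evaluated there; the assumption that the formulas in $F$ are pairwise non-equivalent then forces a \emph{unique} $\varphi \in F$ to be used, so $f'$ is well defined. A routine induction on tree height gives $\Lval(\A') = \Lval(\A)$, and any other \QSDA\ with the same valuation language must have at least one state per $\equiv_\A$-class, yielding minimality and uniqueness up to renaming.

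The main obstacle I anticipate is the final-evaluation map $f$: unlike the word case, here we must show that a single class $[s]$ of root-type-complete subtrees cannot simultaneously require two distinct formulas from $F$ as its evaluation. This is exactly where the pairwise non-equivalence hypothesis on $F$ is used, and care is needed because two syntactically different formulas in $F$ might agree on all data valuations realizable by valuation trees in the class, which would collapse them --- contradicting non-equivalence only if we interpret ``non-equivalent'' over the full domain $\dom$. I would handle this by appealing to the semantic non-equivalence assumption to produce a distinguishing data valuation and lifting it to a distinguishing context, thereby showing that any two subtrees forcing different $F$-formulas are already separated by $\equiv_\A$. Beyond this, the only additional care over the word proof is the bookkeeping of branching arities $i \in [0,k]$ in the transitions $\Delta_i$ and the disjointness clauses on $\mathcal{T}$ at internal nodes, both of which extend mechanically from the word argument of \cite{CAVQDA}.
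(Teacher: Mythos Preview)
Your Myhill--Nerode congruence is defined on \emph{valuation} subtrees, which carry data values, and this is where the argument breaks. The claim that ``states of $\A$ refine $\equiv_\A$'' is false: two valuation subtrees $s_1,s_2$ with the \emph{same} underlying symbolic tree (hence reaching the same state of $\A$) but with different data at a node labelled by some $v\in\PV\cup Y$ will in general be separated by $\equiv_\A$, since a context $C$ can lead to a root state whose formula $f(q)$ depends on $v\!\rightarrow\!\data$. Over an infinite data domain this makes the index of $\equiv_\A$ infinite, so your quotient is not a finite automaton. The same confusion resurfaces in your definition of $f'$: a single class $[s]$ of complete valuation trees determines only one acceptance bit (all trees in the class are accepted, or none are), which is not enough to pin down a formula in $F$; to recover a formula you would need to range over \emph{all} data extensions of a fixed symbolic tree, which your classes do not do.

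The paper avoids this by first stripping the data: it introduces \emph{symbolic trees} (valuation trees without data values), observes that the run of a \QSDA\ depends only on the symbolic tree, and views a \QSDA\ as a Moore machine that reads a symbolic tree and outputs a formula in $F$. Canonicity then reduces to the standard unique-minimal-automaton result for deterministic bottom-up tree automata with output; the pairwise non-equivalence of formulas in $F$ is used exactly to show that each symbolic tree determines a \emph{unique} output formula (so the Moore-machine view is well defined). Your plan becomes correct if you replace ``valuation subtree'' by ``symbolic subtree'' throughout and define $f'$ on symbolic-tree classes by the set of data extensions accepted --- but that is precisely the paper's decomposition.
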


We give some intuition behind the proof of Theorem~\ref{thm-canonical}.
First, we introduce a central concept called \emph{symbolic trees}. A symbolic tree is a $(\Sigma \times (Y \cup \{-\}))$-labelled tree that records the positions of the universal variables and the pointer variables, but does not contain concrete data values (hence the word symbolic).
A valuation tree can be viewed as a symbolic tree augmented with data values at every node in the tree. There exists a unique tree automaton over the alphabet $\Pi$ that accepts a given regular language over symbolic trees. It can be shown that if the set of formulas in $F$ are pair-wise non-equivalent,
then each state $q$  in the tree automaton, at the root, can be decorated with a unique data formula $f(q)$ which extends the symbolic trees with data values such that the corresponding valuation trees are in the given language.

Hence, a language of valuation trees can be viewed as a function that
maps each symbolic tree to a uniquely determined formula, and a \QSDA\ can be viewed as a
Moore machine (an automaton with output function on states) that
computes this function.
This helps us separate the structure of valuation trees (the height of the trees, the cells the pointer variables point to) from the data contained in the nodes of the trees. We formalize this notion by introducing \emph{formula trees}.


\paragraph{\bf Formula trees.}
A \emph{formula tree} over pointer variables $PV$, universal variables $Y$ and a set of data formulas $F$ is a tuple of a $\Sigma \times (Y \cup \{\yblank\})$-labelled tree (or in other words a symbolic tree) and a data formula in $F$ such that if we extend the tree with data values which satisfy the formula, we get a valuation tree.
For a QSDA which captures a universally quantified property of the form $\bigwedge_i \forall y_1 \ldots y_\ell. Guard_i \Rightarrow Data_i$, the symbolic tree component of the formula tree corresponds to guard formulas like $Guard_i$ which express structural constraints on the pointers pointing into the valuation tree.
The data formula in the formula trees correspond to $Data_i$ which express the data values with which a symbolic tree (read $Guard_i$) can be extended so as to get a valuation tree accepted by the QSDA.
In our running example, consider a QSDA with a formula tree which has the same symbolic tree as the valuation tree in Figure~\ref{fig:prog}(d) (but without the data values in the nodes) and a data-formula $\varphi = y_1\rightarrow\data \leq y_2\rightarrow\data \wedge y_1\rightarrow\data < key \wedge y_2\rightarrow\data \geq key$.
This formula tree represents all valuation trees (including the one shown in Figure~\ref{fig:prog}(d)) which extend the symbolic tree with data values which satisfy $\varphi$.

%


By introducing formula trees we explicitly take the view of a \QSDA\ as an automaton that reads symbolic trees and outputs data formulas.
We say a formula tree $(t, \varphi)$ is accepted by a \QSDA\ $\mathcal A$ if $\mathcal A$ reaches the state $q$ after reading $t$ and $f(q) = \varphi$.
Given a \QSDA\ $\A$, the language of valuation trees accepted by $\A$ gives an equivalent language of formula trees accepted by $\A$ and vice-versa. We denote the set of formula trees accepted by $\A$ as $L_f(\A)$. A language over formula trees is called regular if there exists a QSDA accepting the same language.
\begin{theorem}\label{thm-formula-trees}
For each \QSDA\ $\A$ there is a unique minimal \QSDA\ $\A'$ that accepts the
same set of formula trees.
\end{theorem}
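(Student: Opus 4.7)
The plan is to reduce this to a standard Myhill--Nerode style minimization for bottom-up tree Moore machines, by viewing a \QSDA\ as a device that computes a partial function from symbolic trees to data formulas.

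First I would recast acceptance of formula trees functionally. By definition, a \QSDA\ $\A$ runs deterministically over a symbolic tree $t$ (using $\Delta_0, \ldots, \Delta_k$), reaches a state $q$ at the root whenever the run is defined and all type conditions are satisfied, and then ``outputs'' $f(q) \in F$. Thus $\A$ determines a partial function $\hat f_\A$ from symbolic trees to $F$, and $L_f(\A)$ is exactly the graph of $\hat f_\A$. Consequently, two \QSDA s accept the same language of formula trees iff they compute the same partial function on symbolic trees.

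Second, I would define a Myhill--Nerode congruence $\sim$ on symbolic trees: $t_1 \sim t_2$ iff (a) $t_1$ and $t_2$ contain the same set of pointer and universal variables (so they can be plugged into the same symbolic contexts and still yield legal valuation trees), and (b) for every symbolic context $C[\cdot]$, $\hat f_\A(C[t_1]) = \hat f_\A(C[t_2])$, with the equality including the case where both are undefined. Using $\A$, one sees that $\sim$ is refined by the state equivalence induced by $\A$ and so has finite index; it is clearly a congruence with respect to bottom-up tree concatenation. I would then construct $\A'$ by taking the equivalence classes of $\sim$ as states, defining the transitions $\Delta_i'$ by substitution of representatives, setting $\mathcal{T}'([t]_\sim)$ to the common variable set of any representative (well-defined by clause (a)), and defining $f'([t]_\sim)$ as the unique formula $\hat f_\A$ assigns when all variables are accounted for. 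By the standard Moore-machine Myhill--Nerode argument adapted to bottom-up tree automata, any \QSDA\ accepting $L_f(\A)$ admits a surjective state map onto $\A'$ that respects transitions, types, and $f$, giving both minimality and uniqueness up to isomorphism.

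The main obstacle is making sure that the type function $\mathcal{T}$ and the ``each variable occurs exactly once'' constraint do not interfere with the usual minimization. I would handle this by folding the variable-occurrence set into the congruence (clause (a)): this forces two symbolic trees containing different variable sets to lie in different classes, so $\mathcal{T}'$ is well defined on classes and no legal context is ever broken by merging. A pleasant feature, compared with Theorem~\ref{thm-canonical}, is that here there is no need to assume that formulas in $F$ are pairwise non-equivalent: formula trees carry their formula explicitly, so semantically equivalent but syntactically distinct formulas in $F$ correspond to distinct elements of $L_f(\A)$ and are naturally kept apart by the output function of the Moore machine.
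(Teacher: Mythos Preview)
Your proposal is correct and follows the same line the paper indicates: the paper does not give a standalone proof of this theorem but explicitly sets it up by saying that a \QSDA\ ``can be viewed as a Moore machine (an automaton with output function on states)'' that reads symbolic trees and outputs data formulas, and then simply states the theorem. Your Myhill--Nerode argument for bottom-up tree Moore machines is exactly the standard way to cash out that remark, and your handling of the type map $\mathcal{T}$ via clause~(a) of the congruence is the right fix to keep the ``each variable occurs exactly once'' constraint compatible with quotienting; your observation that pairwise non-equivalence of formulas is not needed here (unlike for Theorem~\ref{thm-canonical}) is also on point, since formula trees compare the output formula syntactically as an element of $F$.
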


\section{\QSDA s as an Abstract Domain}

In the previous section we introduced quantified skinny-tree data automata as an automaton model for expressing universally quantified properties over heap skinny-trees. In this section, we first show that QSDAs form a lattice and then formalize the correspondence, by establishing an abstraction function and a concretization function, between a set of heap skinny-trees and \QSDA s.

Given a set of pointer variables $\PV$ and universal variables $Y$, let $\mathcal{Q}_F$ be the set of all \QSDA\ over a set of data formulas $F$. Clearly $\mathcal{Q}_F$ is a partially-ordered set where the most natural partial order is the set-inclusion over the language of \QSDA s.
However checking whether a pair of \QSDA s are ordered with respect to this partial order is undecidable. Since \QSDA s generalize the quantified data automata over words~\cite{CAVQDA}, the undecidability follows from the fact that quantified data automata (as well as \QSDA s) can express quantified invariants such that checking the validity of such invariants is undecidable.

So, we consider a new partial-order on \QSDA s which is decidable, allows us to define a unique \emph{least upper bound} for every pair of \QSDA s and finally show that QSDAs form a lattice. To accomplish this, let us first assume that the set of formulas $F$ parameterizing \QSDA s  form a lattice $\F = (F, \sqsubseteq_\F, \lub_\F, \glb_\F, \emph{false}, \emph{true})$ where $\sqsubseteq_\F$ is the partial-order on the data-formulas, $\lub_\F$ and $\glb_\F$ are the least upper bound and the greatest lower bound and \emph{false} and \emph{true} are formulas required to be in $F$ and correspond to the bottom and the top elements of the lattice, respectively. Also, we assume  that whenever $\alpha \sqsubseteq_\F \beta$ then $\alpha \Rightarrow \beta$. Furthermore, we assume that any pair of formulas in $F$ are non-equivalent.
For a  logical domain as ours, this can be achieved by having a canonical representative for every set of equivalent formulas.

Now if we view a QSDA as a mapping from symbolic trees to formulas in $\F$, we can define a new partial-order relation on \QSDA s as follows. We say $\A_1 \sqsubseteq \A_2$ if $L_f(\A_1) \subseteq L_f(\A_2)$, which means that for every symbolic tree $t$ if $(t, \varphi_1) \in L_f(\A_1)$ and $(t, \varphi_2) \in L_f(\A_2)$ then $\varphi_1 \sqsubseteq_\F \varphi_2$.
Note that, whenever $\A_1 \sqsubseteq \A_2$ implies that $L(\A_1) \subseteq L(\A_2)$.
Also, with respect to this new partial order, we can show that \QSDA s form a complete lattice $(\mathcal{Q_F}, \sqsubseteq, \lub, \glb, \bot, \top)$ where the join of the two automata $\A_1$ and $\A_2$ maps the symbolic tree $t$ to the unique formula $\varphi_1 \lub_\F \varphi_2$. Similarly, the meet of the automata $\A_1$ and $\A_2$ maps the tree $t$ to $\varphi_1 \glb_\F \varphi_2$.
The bottom element in the lattice $\mathcal{Q_F}$ is the \QSDA\ which maps every symbolic tree to $\false$ and the top element is the \QSDA\ which maps every symbolic tree to the formula $\true$.

We now define an abstraction function $\alpha: \HC \rightarrow \mathcal{Q_F}$ and a concretization function $\gamma: \mathcal{Q_F} \rightarrow \HC$ such that $(\HC, \alpha, \gamma, \mathcal{Q_F})$ form a Galois-connection.
Note that, abstract interpretation~\cite{cc77} requires that the abstraction function $\alpha$ maps a concrete element (a language of heap skinny-trees) to a unique element in the abstract domain and that $\alpha$ be surjective; similarly $\gamma$ should be an injective function.
Also note that given a regular language of heap skinny-trees there might be several \QSDA s accepting that language. In such a case defining a surjective function $\alpha$ is not possible.
Therefore,
%
we first restrict ourselves to a set of \QSDA s in $\mathcal{Q_F}$ where each QSDA accepts a different language.
%
%
Under this assumption, we define an $\alpha$ and a $\gamma$ as follows: for a set of heap configurations $\HC$, $\alpha(\HC) = \bigsqcap \{\A ~ | ~ \HC \subseteq L(\A)\} $
and $\gamma(\A) = \{\heap ~ | ~ \heap \in L(\A) \}$.
%
%
Note that both $\alpha$ and $\gamma$ are order-preserving; $\alpha$ is surjective and $\gamma$ is an injective function. Also for a set of heap configurations $\mathcal{H}$, $\mathcal{H} \subseteq \gamma(\alpha(\mathcal{H}))$ and for a QSDA $\A$, $\A = \alpha(\gamma(\A))$.
Hence $(\HC, \alpha, \gamma, \mathcal{Q_F})$ form a Galois-connection.
\begin{theorem}
Let $(\HC, \subseteq)$ be the set of all heap skinny-trees and $(\mathcal{Q_F}, \sqsubseteq)$ be the set of \QSDA s (accepting pairwise inequivalent languages) over data formulas $\F$, then $(\HC, \alpha, \gamma, \mathcal{Q_F})$ form a Galois-connection.
\end{theorem}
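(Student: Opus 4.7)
The plan is to verify the four standard conditions defining a Galois connection between $(\HC,\subseteq)$ and $(\mathcal{Q_F},\sqsubseteq)$: monotonicity of $\alpha$ and of $\gamma$, extensivity $\mathcal{H} \subseteq \gamma(\alpha(\mathcal{H}))$, and reductivity $\alpha(\gamma(\A)) \sqsubseteq \A$. Monotonicity of $\gamma$ is immediate from the already-noted fact that $\A_1 \sqsubseteq \A_2$ implies $L(\A_1) \subseteq L(\A_2)$. Monotonicity of $\alpha$ is purely lattice-theoretic: if $\mathcal{H}_1 \subseteq \mathcal{H}_2$ then $\{\A \mid \mathcal{H}_2 \subseteq L(\A)\} \subseteq \{\A \mid \mathcal{H}_1 \subseteq L(\A)\}$, and the meet of the smaller family of $\mathcal{Q_F}$ is $\sqsubseteq$-larger, whence $\alpha(\mathcal{H}_1) \sqsubseteq \alpha(\mathcal{H}_2)$. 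Reductivity is equally immediate: $\A$ itself belongs to $\{\A' \mid L(\A) \subseteq L(\A')\}$, so the meet of this family is $\sqsubseteq \A$.

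The heart of the proof is extensivity. I would invoke the formula-tree view of Theorem~\ref{thm-formula-trees} to treat each QSDA as a map from symbolic trees to elements of $\F$, so that the meet in $\mathcal{Q_F}$ becomes the pointwise $\glb_\F$. Fix $\heap \in \mathcal{H}$ and any valuation tree $v$ of $\heap$ with symbolic tree $t$ and data tuple $d$. For every $\A$ in the defining family $S = \{\A \mid \mathcal{H} \subseteq L(\A)\}$, $v$ is recognised by $\A$, so $d$ satisfies $\A(t)$. Consequently $d$ satisfies $\alpha(\mathcal{H})(t) = \glb_\F\{\A(t) \mid \A \in S\}$, whence $v$ is recognised by $\alpha(\mathcal{H})$; ranging over all valuation trees of $\heap$ yields $\heap \in L(\alpha(\mathcal{H})) = \gamma(\alpha(\mathcal{H}))$.

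The main obstacle is precisely the step from ``$d$ satisfies each $\A(t)$ individually'' to ``$d$ satisfies their $\glb_\F$''. This requires the abstract data lattice $\F$ to behave compatibly with conjunction at the model level, which holds under the customary assumption that $\F$ is closed under intersection of concretisations (as in intervals, octagons, and polyhedra); this assumption is implicit in the paper's lattice framework. Combining extensivity with the monotonicity and reductivity observations above, together with the already-noted surjectivity of $\alpha$ and injectivity of $\gamma$ obtained by restricting to QSDAs with pairwise inequivalent languages, yields the claimed Galois connection.
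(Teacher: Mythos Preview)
Your proposal is correct and follows essentially the same approach as the paper, which simply asserts (in the paragraph preceding the theorem) that $\alpha$ and $\gamma$ are order-preserving, that $\mathcal{H}\subseteq\gamma(\alpha(\mathcal{H}))$, and that $\A=\alpha(\gamma(\A))$; you supply the actual arguments for these conditions, and your flagging of the semantic compatibility of $\glb_\F$ with model-level satisfaction is a genuine subtlety the paper elides. One minor difference worth noting: the paper claims the \emph{equality} $\A=\alpha(\gamma(\A))$ (making this a Galois insertion), whereas you establish only the inequality $\alpha(\gamma(\A))\sqsubseteq\A$, which is all that is needed for the Galois connection as stated.
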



\section{Abstract Semantics over \QSDA s}
\label{qdas}


In the previous section we established a Galois-connection between a set of heap skinny-trees and \QSDA s. Here, we describe an abstract transformer over \QSDA s which soundly over-approximates the concrete semantics of the programming language. This provides a way to compute the semantics of a program over an abstract domain consisting of \QSDA s.


We first show that it is not possible to capture the most-precise concrete transformer on \QSDA s.
A QSDA expresses universally quantified properties over heap trees, of the form $\forall y_1 \ldots y_\ell. \psi$ where $\psi$ is a quantifier-free formula over the pointer variables $\PV$, the universal variables $Y$ and the data value at the locations pointed to by these variables.
Given a \QSDA\ $\A$, the concrete transformer $F^\natural$ guesses a pre-state accepted by $\A$ (which involves existential quantification), and then constrains the post-state with respect to this guessed pre-state according to the semantics of the statement.
%
For instance, consider the statement $p_i := p_j$. Given a \QSDA\ accepting a universally quantified property $\forall y_1 \ldots y_\ell. \psi$, its strongest post-condition with respect to this statement is the formula: $\exists p_i'. \forall y_1 \ldots y_\ell. \psi[p_i/p_i'] ~\wedge~ p_i = p_j$.
%
%
\noindent Note that, an interpretation of the existentially quantified variable $p_i'$ in a model of this formula gives the location node pointed to by variable $p_i$ in the pre-state, 
such that the formula $\forall y_1 \ldots y_\ell. \psi$ was satisfied by the pre-state.
However it is not possible to express these precise post-conditions, which are usually of the form $\exists^* \forall^* \psi$, in our automaton model.
So we abstract these precise post-conditions by a \QSDA\ which semantically moves the existential quantifiers inside the universally quantified prefix, where they can be eliminated.
In the above example, the abstract post-condition \QSDA\ guesses a  position for the pointer variable $p_i$ 
for every valuation of the universal variables,  such that the valuation tree augmented with this guessed position 
is accepted by the precondition \QSDA. More generally, the abstract transformer computes the most precise post-condition over the language of valuation trees 
accepted by a \QSDA, instead of computing the precise post-condition over the language of heap skinny-trees.
In fact, we go beyond valuation trees to formula trees; the abstract transformer evolves the language of formula trees accepted by a \QSDA\ by tracking the precise set of symbolic trees to be accepted in the post-\QSDA\ and their corresponding data formulas.

%

\begin{table*}[th]
\centering
\captionsetup{font=small}
\scalebox{0.85}{
\begin{tabular}{| l | l |}

\hline
{\bf Statements} & ~~~~{\bf Abstract Transformer $F_f^\sharp$ on a regular language over formula trees}\\\hline\hline
$p_i := nil$ & $\lambda L_f.~ \big\{(t', \varphi') ~|~\varphi': \bigsqcup \{ \exists d. \varphi[p_i\rightarrow\data/d] ~ | ~ (t, \varphi) \in L_f, update(t, p_i := nil, t') \}  \big\}$ \\\hline

$p_i := p_j$ & $\lambda L_f.~ \big\{(t', \varphi') ~|~\varphi': (p_i\rightarrow\data = p_j\rightarrow\data) \sqcap ~ $\\
& $~~~~~~~~~~~~~~~~~~~~~~~~~\bigsqcup \{ \exists d. \varphi[p_i\rightarrow\data/d] ~ | ~ (t, \varphi) \in L_f, update(t, p_i := p_j, t') \}  \big\}$ \\\hline

$p_i := p_j\rightarrow\next$ & $\lambda L_f.~ \big\{(t', \varphi') ~|~\varphi': \bigsqcup \{ \exists d. \varphi[p_i\rightarrow\data/d] ~ | ~ (t, \varphi) \in L_f, update(t, p_i := p_j\rightarrow\next, t') \} $ \\
& $ ~~~~~~~~~~~~~~~~~~~~~~~~~\sqcap  (p_i\rightarrow\data = v\rightarrow\data), v \in label(t', p_i)  \big\}$ \\\hline

$p_i\rightarrow\next := nil$ & $\lambda L_f.~ \big\{(t', \varphi') ~|~\varphi': \bigsqcup \{ \varphi ~ | ~ (t, \varphi) \in L_f, update(t, p_i\rightarrow\next := nil, t') \}  \big\}$ \\\hline

$p_i\rightarrow\next := p_j$ & $\lambda L_f.~ \big\{(t', \varphi') ~|~\varphi': \bigsqcup \{ \varphi ~ | ~ (t, \varphi) \in L_f, update(t, p_i\rightarrow\next := p_j, t') \}  \big\}$ \\\hline

$p_i\rightarrow\data := $ & $\lambda L_f.~ \big\{(t', \varphi') ~|~ \varphi': \exists d. \varphi[v_1\rightarrow\data/d,\ldots,v_\ell\rightarrow\data/d] $ \\
$~~~~~~~~data\_expr$ & $~~~~~~~~~~~~~~~~~~~~~~~~~\sqcap \bigsqcap \{ v\rightarrow\data = data\_expr ~|~ v \in V\}$, \\
&~~~~~~~~~~~~~~~~~~~~~~~~~~~~~~~~~~~~~~~~~~~~~~~~~~~~$V =\{v_1,\ldots,v_\ell\}= label(t', p_i), (t', \varphi) \in L_f \big\}$\\\hline

$\textit{assume } \psi_{\mathit{struct}}$ & $\lambda L_f.~ \big\{(t', \varphi') ~|~ (t', \varphi') \in L_f, ~
t' \models \psi_{\mathit{struct}} \big\}$ \\\hline

$\textit{assume } \psi_{\mathit{data}} $ & $\lambda L_f.~ \big\{(t', \varphi') ~|~ \varphi': \varphi \sqcap \psi_{\mathit{data}}, (t', \varphi) \in L_f \big\}$ \\\hline

$\textit{new } p_i$ & $
\lambda L_f. \big\{ (t', \varphi') ~ | ~ \varphi':  (y\rightarrow\data = p_i\rightarrow\data) \sqcap$ \\
& $~~~~~~~~~ \bigsqcup \{ \exists d_1 d_2. \varphi[p_i\rightarrow\data/d_1, y\rightarrow\data/d_2] ~|~ (t, \varphi) \in L_f, update(t, new^y~p_i, t') \},$\\
& $~~~~~~~~~~~~~~~~~~y \in Y \cup \{-\}  \big\}$ \\\hline

\end{tabular}}
\caption{Abstract Transformer $F_f^\sharp$. The abstract transformer $F^\sharp = \lambda \mathcal{A}. \mathcal{A'}$ where $\mathcal{A'}$ is the unique minimal \QSDA\ such that $L_f(\mathcal{A'}) = (F_f^\sharp) ~L_f(\mathcal{A})$. The predicate $update$ and the set $label$ are defined below.}
\label{tbl-abstract-transformer}
\end{table*}

Table~\ref{tbl-abstract-transformer}\footnote{The abstract transformer defined in Table~\ref{tbl-abstract-transformer} assumes that there are no memory errors in the program. It can be extended to handle memory errors.} gives the abstract transformer $F_f^\sharp$ which takes a regular language over formula trees $L_f$ and gives, as output, a set of formula trees. 
We know from Theorem~\ref{thm-formula-trees} that for any regular set of formula trees there exists a unique minimal \QSDA\ that accepts it. We show below (see Lemma~\ref{lemma-regular}) that for a \QSDA\ $\mathcal{A}$, the language over formula trees given by $(F_f^\sharp) ~L_f(\mathcal{A})$ is regular. Hence, we can define the abstract transformer $F^\sharp$ as $F^\sharp = \lambda \mathcal{A}. \mathcal{A'}$ where $\mathcal{A'}$ is the unique minimal \QSDA\ such that $L_f(\mathcal{A'}) = (F_f^\sharp) ~L_f(\mathcal{A})$.


In Table~\ref{tbl-abstract-transformer}, $label(t, p_i)$ is the set of pointer and universal variables 
which label the same node in $t$ as variable $p_i$. 
The predicate $update(t, \mathit{stmt}, t')$ is true if symbolic trees $t$ and $t'$ are related such that the execution of statement $\mathit{stmt}$ updates precisely the symbolic tree $t$ to $t'$.
As an example, the abstract transformer for the statement $p_i := nil$ in the first row of Table~\ref{tbl-abstract-transformer} states that the post-\QSDA\ maps the symbolic tree $t'$ to the data-formula $\varphi'$ where $\varphi'$ is the join of all formulas of the form $\exists d. \varphi[p_i\rightarrow\data/d]$ where $\varphi$ is the data-formula associated with symbolic tree $t$ in the pre-\QSDA\ such that $update(t, p_i := nil, t')$ is true.

We now briefly describe the predicate $update(t, new^y~p_i, t')$ which is used in the definition of the transformer for the \textit{new} statement and is slightly more involved.
The statement $\textit{new } p_i$ allocates a new memory location. After the execution of this statement, pointer $p_i$ points to this allocated node. Besides, the universal variables also need to valuate over this new node apart from the valuations over the previously exisiting locations in the heap.
The superscript $y$ in the predicate $update(t, new^y~p_i, t')$ tracks the case when variable $y \in Y \cup \{-\}$ valuates over the newly allocated node.
Hence, if $update(t, new^y~p_i, t')$ holds true then the symbolic trees $t$ and $t'$ agree on the locations pointed to by all variables except $p_i$ and the universal variable $y$; both these variables point, in $t'$, to a new location $v$ which is not in $t$ and a new edge exists in $t'$ from the root to $v$.


From the construction in Table~\ref{tbl-abstract-transformer} it can be observed that given a language of valuation trees obtained uniquely from a language of formula trees, $F_f^\sharp$ applies the most-precise concrete transformer on each valuation tree in the language, and then constructs the smallest regular language of valuation trees (or equivalently formula trees) which approximates this set.
More precisely, for all formula trees $(t, \varphi) \in L_f(\A)$, the abstract transformer $F_f^\sharp$ applies the precise concrete transformer on the symbolic tree $t$ (only the structure with the valuations for universal variables) to obtain $t'$. And separately, it applies the precise concrete transformer on the data extensions of $t$, which is given by $\varphi$, to obtain the data formula  $\varphi'$ such that $(t', \varphi') \in (F_f^\sharp) L_f(\A)$.
However, note that reasoning over valuation/formula trees (and not heap skinny-trees) comes with a loss in precision.
To regain some of this lost precision, we define a function \emph{Strengthen} which takes a formula language $L_f$ and finds a smaller language over formula trees, which accepts the same set of heap trees. Here $t \downharpoonright_y$ stands for a $\Pi \backslash \{y\}$ -labelled tree which agrees with $t$ on the locations pointed to by all variables except $y$.
\begin{align}
\textit{Strengthen} =   \lambda y. \lambda L_f. \big\{ (t', &\varphi') ~ | ~ \varphi': \varphi'' \sqcap \phi, ~(t', \varphi'') \in L_f,  \nonumber \\
	  		& \phi: \bigsqcap \{ \exists d. \varphi[y\rightarrow\data/d] ~|~ (t, \varphi) \in L_f, t \downharpoonright_y = t' \downharpoonright_y \}   \big\} \nonumber
\end{align}
\noindent We now reason about the soundness of the operator \emph{Strengthen}. Fix a $y \in Y$. Consider a \QSDA\ $\A$ with a language over formula trees $L_f$ and consider all symbolic trees $t$ such that $t \downharpoonright_y = t' \downharpoonright_y$. This implies that
the trees $t$ have the pointer variables pointing to the same positions as $t'$
and have the same valuations for variables in $Y \backslash \{y\}$.
Since our automaton model has a universal semantics, any heap tree accepted by $\A$ should satisfy the data formulas annotated at the final states reached for every valuation of the universal variables. If we look at a fixed valuation for variables in $Y \backslash \{y\}$ (which is same as that in $t'$) and different valuations for $y$, any heap tree accepted should satisfy the formula $\exists d. \varphi[y\rightarrow\data/d]$ for all such $(t, \varphi) \in L_f$. Hence the \emph{Strengthen} operator can safely strengthen the formula $\varphi''$ associated with the symbolic tree $t'$ to $\varphi'' \sqcap \phi$.
Appendix~\ref{app-strengthen} shows that for a given universal variable $y$ and a regular language $L_f$, the language over formula trees (\textit{Strengthen}) $y$ $L_f$ is regular. The proof in fact constructs the \QSDA\ accepting the language (\textit{Strengthen}) $y$ $L_f(\A)$ for a \QSDA\ $\A$.
The abstract transformer $F_f^\sharp$ can be thus soundly strengthened by an application of \textit{Strengthen} at each step, for each variable $y \in Y$.

We now prove that the language over formula trees given by $(F_f^\sharp) L_f(\A)$ is a regular language for any \QSDA\ $\A$. This helps us to construct the abstract transformer $F^\sharp: \mathcal{Q_F} \rightarrow \mathcal{Q_F}$. And finally, we show that this abstract transformer is a sound approximation of the concrete transformer $F^\natural$.

\begin{lemma} \label{lemma-regular}
For a \QSDA\ $\mathcal{A}$, the language $(F_f^\sharp) ~L_f(\mathcal{A})$ over formula trees is regular.
\end{lemma}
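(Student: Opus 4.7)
The plan is to prove the lemma by case analysis on the statement, verifying each row of Table~\ref{tbl-abstract-transformer} by explicitly constructing a \QSDA\ $\mathcal{A}'$ accepting $(F_f^\sharp) L_f(\mathcal{A})$. Each entry of the table decomposes the transformer into two independent ingredients: (i) a local structural rewriting on symbolic trees captured by the predicate $update(t,\mathit{stmt},t')$, and (ii) a data-formula transformation in the lattice $\F$ (some combination of substitution, existential quantification, and meet with a simple equality). A preliminary observation is that because $\mathcal{A}$ has finitely many states, the set $\{f(q) : q \in Q\}$ of formulas in its range is finite, so every infinite-looking join in the table collapses to a finite join in $\F$ --- and $\F$, being an abstract lattice over data formulas, is closed under the operations required.

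For the structural part, I would build a deterministic bottom-up tree transducer $\tau_{\mathit{stmt}}$ realizing the relation $update(\cdot,\mathit{stmt},\cdot)$. For purely relabelling statements ($p_i := p_j$, $p_i\!\rightarrow\!\mathtt{next} := p_j$, $\mathtt{assume}$), $\tau_{\mathit{stmt}}$ simply rewrites the $\Sigma$-component of each node and preserves the tree shape. For statements that also need to check structural adjacency ($p_i := p_j\!\rightarrow\!\mathtt{next}$), the transducer threads a small amount of state through neighbouring nodes. For \emph{new}~$p_i$, $\tau_{\mathit{stmt}}$ nondeterministically inserts a fresh leaf child of the root whose label is $(\{p_i\},y)$ for some $y \in Y \cup \{-\}$; this blows the state space up by the finite factor $|Y|+1$ but remains finite. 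In every case $\tau_{\mathit{stmt}}$ is linear, so its image on the regular symbolic-tree language underlying $L_f(\mathcal{A})$ is again regular.

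The \QSDA\ $\mathcal{A}'$ is then obtained by a product-plus-subset construction. I would pair the states of $\mathcal{A}$ with those of $\tau_{\mathit{stmt}}$, and then determinize on the output alphabet so that after reading a prefix of an output symbolic tree $t'$ the state records the set of pairs $(q_{\mathcal{A}}, q_{\tau})$ reachable on the preimages $t$ with $update(t,\mathit{stmt},t')$. At the root, the final-evaluation of $\mathcal{A}'$ assigns to each reachable set of pairs the join in $\F$ of the per-pair formulas, after applying the statement-specific substitution/quantification and meeting with any equality like $p_i\!\rightarrow\!\mathtt{data} = p_j\!\rightarrow\!\mathtt{data}$ prescribed in the table. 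The type map $\mathcal{T}$ of $\mathcal{A}'$ is inherited from the product, with $p_i$ (and, for \emph{new}, possibly the fresh $y$) re-registered on the appropriate output node. Since the image of $f$ is finite and $F$ is closed under the needed operations, this produces a genuine, finite \QSDA.

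The main obstacle is the \emph{new}~$p_i$ case: the symbolic tree gains a node absent from any preimage, the universal variable $y$ placed on that node has to be existentially eliminated from the pre-formula via $\exists d_1 d_2.\,\varphi[p_i\!\rightarrow\!\mathtt{data}/d_1, y\!\rightarrow\!\mathtt{data}/d_2]$, and the post-formula must record the fresh equality $y\!\rightarrow\!\mathtt{data} = p_i\!\rightarrow\!\mathtt{data}$. One must check that the type-map invariants of \QSDA s (each variable occurs exactly once) are preserved by the insertion, and that the resulting language really coincides with $(F_f^\sharp) L_f(\mathcal{A})$ rather than an over-approximation; verifying this bookkeeping is the only delicate part of the argument.
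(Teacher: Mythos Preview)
Your approach is essentially the paper's: a case-by-case construction of a \QSDA\ $\mathcal{A}'$ from $\mathcal{A}$, followed by determinization (subset construction) that merges formulas via $\sqcup_\F$ when several preimage runs collapse onto the same output symbolic tree. Framing this through tree transducers is a clean abstraction of what the paper does concretely by editing transitions; the finiteness-of-range observation for $f$ is exactly the reason the joins in the table are effective.

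There is, however, one concrete error in your case split. You classify $p_i\!\rightarrow\!\mathtt{next} := p_j$ among the ``purely relabelling statements'' that ``preserve the tree shape''. It does not. In the skinny-tree encoding the edges are the \emph{reverse} of $\mathtt{next}$, so executing $p_i\!\rightarrow\!\mathtt{next} := p_j$ detaches the entire subtree rooted at the node carrying $p_i$ from its current parent and reattaches it as a new child of the node carrying $p_j$ (and $p_i\!\rightarrow\!\mathtt{next} := \mathtt{nil}$ similarly detaches a subtree and hangs it under the $\mathtt{nil}$ node). In the paper's construction this is by far the most intricate case: the built automaton must nondeterministically guess the state $\hat q$ that $\mathcal{A}$ would have reached on the moved subtree, thread that guess through both the detachment site and the reattachment site using paired states $(q,\hat q)$, and rule out the configurations that would create a cycle (when $p_j$ lies inside the subtree rooted at $p_i$). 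Your linear-transducer framework can accommodate this---a subtree move is still linear---but the ``rewrite the $\Sigma$-component'' story does not apply here, and the bookkeeping you flag for $\mathtt{new}\ p_i$ is in fact lighter than what this case demands.
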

\begin{proof}
We prove via construction. Given a \QSDA\ $\mathcal{A}$, we construct a \QSDA\ $\mathcal{A'}$ such that $L_f(\mathcal{A'}) = (F_f^\sharp) ~L_f(\mathcal{A})$.

Here we only give the construction of $\mathcal{A'}$ for the statement $p_i := nil$ (for other statements, see Appendix~\ref{app-construction}). The \QSDA\ $\mathcal{A'}$ simulates $\mathcal{A}$ on all the nodes, except a node $v$ labeled with pointer $p_i$ and the node labeled $nil$.
A tree accepted by $\mathcal{A'}$ does not read $p_i$ at node $v$; on the other hand $p_i$ is read at the $nil$ node.
Let $\mathcal{A} = (Q, \Pi, \Delta, \mathcal{T}, f)$, then $\mathcal{A'}$ is of the form $(Q, \Pi, \Delta', \mathcal{T'}, f')$. For every transition $\Delta(q_1, \ldots, q_j, \pi) =  q$ such that
$p_i$ and $nil$ are not present in the label $\pi$,  $\Delta'(q_1, \ldots, q_j, \pi) = q$.
However if $p_i$ is present in $\pi$ and $nil$ is not, then the corresponding transition in $\mathcal{A'}$ is $\Delta'(q_1, \ldots, q_j, \pi') = q$ where $\pi'$ is same as $\pi$ except it does not have $p_i$.
On the other hand, if $nil$ is present in $\pi$ and $p_i$ is not, then the transition $\Delta'(q_1, \ldots, q_j, \pi') = q$ where $\pi'$ is same as $\pi$ except for the presence of $p_i$.
The new type $\mathcal{T'}$ can be easily computed for every state in the automaton. The evaluation function $f'$ existentially quantifies out the data value of pointer $p_i$ i.e. for all states $q \in Q$, $f'(q) = \exists d. f(q)[p_i\rightarrow\data/d]$.
The transition relation $\Delta'$ thus constructed might need to be determinized to obtain $\mathcal{A'}$.
For a symbolic tree $t$ and formulas $\varphi_1, \ldots, \varphi_j$ such that $(t, \varphi_1), \ldots, (t, \varphi_j)$ belong to the language, the determinization procedure maps $t$ to the formula $\varphi_1 \sqcup \ldots \sqcup \varphi_j$.
It can be easily shown that the language of $\A'$ is $(F^\sharp)~L_f(\mathcal{A})$.

Appendix~\ref{app-construction} shows the construction of the automaton $\mathcal{A'}$ for other statements in our language. In this way, via construction, we prove that  the language $(F^\sharp)~L_f(\mathcal{A})$ over formula trees is regular.
\qed
\end{proof}

From Lemma~\ref{lemma-regular} and Theorem~\ref{thm-formula-trees} it follows that there exists a \QSDA\ $\mathcal{A'}$ such that $\mathcal{A'} = (F^\sharp) \mathcal{A}$. In fact the proof of Lemma~\ref{lemma-regular} constructs such an automaton $\mathcal{A'}$.
The monotonicity of $F^\sharp$ with respect to $F^\natural$ follows from the monotonicity of $F_f^\sharp$.
The soundness of $F^\sharp$ can be stated as the following theorem.

\begin{theorem}
The abstract transformer $F^\sharp$ is sound with respect to the concrete transformer $F^\natural$.
\end{theorem}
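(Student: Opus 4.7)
The goal is to show $F^\natural(\gamma(\A)) \subseteq \gamma(F^\sharp(\A))$ for every \QSDA\ $\A$, or equivalently that every post-state heap skinny-tree obtainable by one concrete step from some pre-state in $\gamma(\A)$ is accepted by $F^\sharp(\A)$. Since $F^\natural$ is defined per statement and $F^\sharp$ (via $F_f^\sharp$) is defined per statement in Table~\ref{tbl-abstract-transformer}, it suffices to prove a statement-by-statement lemma. By Lemma~\ref{lemma-regular} and Theorem~\ref{thm-formula-trees}, $F^\sharp(\A)$ is the unique minimal \QSDA\ whose formula-tree language is $(F_f^\sharp)L_f(\A)$, so soundness reduces to: whenever $\mathcal{H} \in \gamma(\A)$ and $\mathcal{H} \xrightarrow{\mathit{stmt}} \mathcal{H}'$, every valuation tree of $\mathcal{H}'$ is recognized by the \QSDA\ for $(F_f^\sharp)L_f(\A)$.

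The plan is to fix a valuation tree $(t',\lambda')$ of $\mathcal{H}'$ and construct a matching pre-valuation tree $(t,\lambda)$ of $\mathcal{H}$. The symbolic tree component will be exactly the one forced by the predicate $update(t,\mathit{stmt},t')$, which by construction traces the structural effect of the statement at the level of symbolic trees; verifying that this bijection between pre- and post- symbolic trees matches the heap-level semantics is a routine case analysis on the grammar of $\langle\mathit{heap\_stmt}\rangle$. Once the symbolic sides are matched, the data values on $t$ are determined by $\mathcal{H}$, so $(t,\varphi) \in L_f(\A)$ for some $\varphi$ that the pre-state satisfies, and we must show the post-data satisfies the formula $\varphi'$ assigned by $F_f^\sharp$ to $t'$.

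The crux is a semantic implication, one per statement, of the general form $\exists \overline{p}\,'.\forall \overline{y}.\,\psi \;\Rightarrow\; \forall \overline{y}.\,\exists \overline{p}\,'.\,\psi$, which lets us push the existential quantifiers introduced by the strongest post (over old values of the modified pointer, or over the freshly allocated cell in the case of $\texttt{new}$) inside the universal prefix so they can be abstracted into the data formula at a single symbolic tree. Concretely, for $p_i := p_j$ we use $\exists d.\varphi[p_i\rightarrow\data/d]$ conjoined with $p_i\rightarrow\data = p_j\rightarrow\data$; for $p_i := p_j\rightarrow\next$ we additionally equate $p_i$'s data to that of whichever variable labels its new cell in $t'$; for $p_i\rightarrow\data := \mathit{data\_expr}$ we existentially quantify the old data at the modified node and conjoin the new value equation; and the $\textit{assume}$ cases are immediate. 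Joining ($\bigsqcup_\F$) over all pre-trees $t$ with $update(t,\mathit{stmt},t')$ is sound because the data valuation on the chosen $(t,\varphi)$ certainly satisfies $\varphi$, hence satisfies the join, and Galois connectivity of $\F$ guarantees the approximation is safe.

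The main obstacle will be the $\texttt{new}\ p_i$ rule, because the freshly allocated node breaks the one-to-one correspondence on which the other cases rely: one must additionally existentially quantify the data of the variable $y \in Y \cup \{-\}$ that valuates over the new cell in $t'$ (since no node of $t$ corresponds to it), and case-split on whether the new node is ever the witness for some universally quantified variable in the post-state — this is exactly what the superscript $y$ in $update(t, \mathit{new}^y\, p_i, t')$ and the $\bigsqcup$ over $y \in Y \cup \{-\}$ in the $\texttt{new}$ row encode. A careful case analysis shows that the disjunction over possible $y$ covers all valuation trees of $\mathcal{H}'$, and the existential substitution $\exists d_1 d_2.\,\varphi[p_i\rightarrow\data/d_1, y\rightarrow\data/d_2]$ is semantically implied by the universal property at the pre-state for every valuation that agrees with $t$ on the remaining variables. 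Monotonicity of $F^\sharp$, inherited from the monotonicity of $\sqcap_\F, \sqcup_\F$ and of substitution/existential abstraction in $\F$, completes the argument and, combined with the Galois connection, yields soundness.
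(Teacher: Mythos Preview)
Your proposal is correct and follows essentially the same approach as the paper: show $F^\natural\circ\gamma \sqsubseteq \gamma\circ F^\sharp$ by fixing an arbitrary valuation of $Y$ over the post-heap $\HC'$, pulling it back along the $update(t,\mathit{stmt},t')$ relation to a valuation over the pre-heap $\HC$, and then arguing that the data record satisfies the formula assigned to $t'$ by $F_f^\sharp$. Your write-up is in fact more explicit than the paper's --- the paper does not spell out the per-statement case analysis, the $\exists\forall\Rightarrow\forall\exists$ quantifier swap, or the special treatment of $\texttt{new}\ p_i$ --- but these are exactly the details underlying the paper's one-line appeal to ``the abstract transformer applies the precise concrete transformer \ldots\ and over-approximates it.''
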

\begin{proof}
We prove the soundness of $F^\sharp$ by showing that  $F^\natural \circ \gamma \sqsubseteq \gamma \circ F^\sharp$.
Let us consider a QSDA $\A$ and a heap skinny-tree $\HC$ such that $\HC \in L(\A)$. Consider a statement $stmt$ such that $\HC$ gets transformed to $\HC'$ on the execution of $stmt$ i.e. $F^\natural_{stmt}(\HC) = \HC'$. We would like to prove that $\HC' \in L(F^\sharp(\A))$.

To prove this, consider a valuation of universal variables $Y$ over the nodes in $\HC'$. Let the corresponding symbolic tree be $t'$ and let the data values in $\HC'$ at the positions pointed to by $Y$ be given by $r': Y \rightarrow \mathbb{D}$. Let us assume that the QSDA $F^\sharp(\A)$ maps the symbolic tree $t'$ to the formula $\varphi'$. Then, we would like to prove that $r' \models \varphi'$. By arguing over all valuations over $\HC'$, this would prove that $\HC' \in L(F^\sharp(\A))$.

To prove that  $r' \models \varphi'$, fix a valuation of the universal variables $Y$ over the nodes in $\HC$ such that the corresponding symbolic tree $t$ satisifes $update(t, stmt, t')$ (for statements which do not modify the structure of the heap, $t = t'$). Let the data values at the positions pointed to by universal variables $Y$ in $\HC$ be given by $r$. Since $\HC \in L(\A)$, if $\A$ maps the symbolic tree $t$ to the formula $\varphi$ then $r \models \varphi$.
The abstract transformer $F_f^\sharp$ applies the precise concrete transformer, with respect to only the data values of the heap, to the formula $\varphi$ and over-approximates it to obtain $\varphi'$. From the monotonicity of the concrete transformer, this implies that $r' \models \varphi'$.
\qed
\end{proof}

Since $F^\sharp$ is both monotonic and sound, from the Knaster-Tarski theorem, the set of equations of the form $\psi = F^\sharp(\psi)$ expressing the abstract semantics of a program admit a least fix point solution, and this solution is a sound approximation of the concrete semantics of the program.

Note that the abstract transformer, in general, might require a powerset construction over the input QSDA, very similar to the procedure for determinizing a tree automaton. Hence the worst-case complexity of the abstract transformer is exponential in the size of the QSDA. However our experiments show that this worst-case is not achieved for most programs in practice.
\begin{theorem}
The abstract semantics of a program, computed with respect to the abstract transformer $F^\sharp$, is correct.
\end{theorem}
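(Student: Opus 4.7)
The plan is to invoke the standard fixed-point transfer theorem of abstract interpretation (Cousot--Cousot~\cite{cc77}) using the ingredients already assembled in the paper: the Galois connection $(\mathcal{H}, \alpha, \gamma, \mathcal{Q_F})$, the monotonicity of $F^\sharp$, and the local soundness $F^\natural \circ \gamma \sqsubseteq \gamma \circ F^\sharp$ established in the previous theorem. Concretely, ``correctness'' here means that $\gamma(\mathrm{lfp}(F^\sharp))$ over-approximates $\mathrm{lfp}(F^\natural)$, i.e.\ every concrete reachable configuration of the program is contained in the language of the \QSDA\ computed as the abstract fixed point.

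First I would set up the Kleene iteration of both operators. Since $F^\natural$ is a monotone operator on the powerset lattice of heap skinny-trees, $\mathrm{lfp}(F^\natural) = \bigsqcup_n (F^\natural)^n(\bot_\mathcal{H})$, and since $F^\sharp$ is a monotone operator on the complete lattice $\mathcal{Q_F}$, Knaster--Tarski (already cited) gives $\mathrm{lfp}(F^\sharp) = \bigsqcup_n (F^\sharp)^n(\bot)$, where $\bot$ is the \QSDA\ mapping every symbolic tree to $\mathit{false}$. Note that $\gamma(\bot) = \emptyset = \bot_\mathcal{H}$, so the base case of the induction is immediate.

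Next I would prove by induction on $n$ that
\[
(F^\natural)^n(\bot_\mathcal{H}) \subseteq \gamma\bigl((F^\sharp)^n(\bot)\bigr).
\]
The base case is the equality above. For the inductive step, assume $(F^\natural)^n(\bot_\mathcal{H}) \subseteq \gamma((F^\sharp)^n(\bot))$. By monotonicity of $F^\natural$,
\[
(F^\natural)^{n+1}(\bot_\mathcal{H}) \subseteq F^\natural\bigl(\gamma((F^\sharp)^n(\bot))\bigr),
\]
and by the soundness inequality $F^\natural \circ \gamma \sqsubseteq \gamma \circ F^\sharp$ from the previous theorem,
\[
F^\natural\bigl(\gamma((F^\sharp)^n(\bot))\bigr) \subseteq \gamma\bigl(F^\sharp((F^\sharp)^n(\bot))\bigr) = \gamma\bigl((F^\sharp)^{n+1}(\bot)\bigr),
\]
which closes the induction. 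Taking the supremum over $n$ and using that $\gamma$ preserves the joins arising in this Kleene chain (or equivalently, that the inclusion is preserved in the limit by a standard continuity/limit argument on the language lattice), we obtain $\mathrm{lfp}(F^\natural) \subseteq \gamma(\mathrm{lfp}(F^\sharp))$, which is the correctness statement.

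The only genuinely delicate point is the limit step: $\gamma$ need not be fully continuous for arbitrary operators, but here it suffices that the chain $\{(F^\sharp)^n(\bot)\}_n$ has a join whose concretization contains the union of the concretizations of its elements, which holds because $\gamma$ is order-preserving and because the lattice $\mathcal{Q_F}$ is complete. If one wishes a cleaner argument, the transfinite Kleene iteration for monotone operators on complete lattices guarantees that the ordinal-indexed chain stabilizes at the least fixed point, and the inductive inequality above lifts to every ordinal by the same monotonicity-plus-local-soundness step, with the successor case handled exactly as above and the limit case handled by joining the inductive hypotheses. This gives the desired over-approximation and hence correctness of the abstract semantics.
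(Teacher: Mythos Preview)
Your proposal is correct and follows essentially the same approach as the paper: the paper does not give a detailed argument for this theorem but simply states, in the sentence preceding it, that since $F^\sharp$ is monotonic and sound, Knaster--Tarski guarantees a least fixed point that soundly approximates the concrete semantics. You have merely unfolded this into the standard Cousot--Cousot fixed-point transfer argument, and your handling of the limit step via transfinite iteration is an appropriate way to make the claim rigorous (an even shorter route, if you want one, is to observe directly that $\gamma(\mathrm{lfp}\,F^\sharp)$ is a pre-fixed point of $F^\natural$ and apply Knaster--Tarski on the concrete side).
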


\section{Elastic Quantified Skinny-Tree Data Automata}
For a given set of pointer variables $\PV$ and universal variables $Y$, the QSDAs can be of arbitrarily large size. The number of QSDAs is not bounded and the computation of the abstract semantics of a program over QSDAs might not converge.
To remedy this problem, we identify a sub-class of QSDAs called elastic quantified  skinny-tree data automata (\EQSDA s).
Elastic QSDAs provide a mechanism to accelerate the fix-point computation over QSDAs. However, instead of choosing any acceleration mechanism, the elastic QSDAs were chosen keeping decidability of the invariants they express in mind.  A key property in the decidable fragment of $\Strand$ is that it cannot test whether two universally quantified variables are a bounded distance away.
We show in Section~\ref{sec-translation} that the invariants expressed by \EQSDA s fall in the decidable fragment of $\Strand$.
So \EQSDA s not only help in guaranteeing the convergence of the abstract semantics of a program, but also ensure that a program, if annotated with a set of assertions over logical formulas in $\Strand$, can be proved correct by validating those assertions in a decidable manner.

Let us denote 
the symbol $(b, \yblank) \in \Pi$ indicating that a position does not contain any variable by $\blank$.
A QSDA $A = (Q, \Pi, \Delta, \mathcal{T}, f)$ where $\Delta = (\Delta_0, \Delta_1,  \ldots, \Delta_k)$ is called elastic if each transition on $\blank$ in $\Delta_1$ is a self loop i.e. $\Delta_1(q_1, \blank) = q_2$ implies $q_1= q_2$.

We first show that the number of \EQSDA s is bounded for a fixed set $\PV$ and $Y$.
Recall that heap skinny-trees accepted by QSDAs require that the number of branching nodes in the skinny trees are bounded. So, the only infinity in the size of a skinny-tree is due to an unbounded number of $\blank$-labelled nodes which might occur along linear segments of the tree.
If we simulate an elastic QSDA on a skinny-tree accepted by it, all consecutively occurring $\blank$-labelled nodes along linear segments of the tree are labelled with the same state (due to the elasticity property).
To count the maximum number of states, in an \EQSDA, required to accept a language over heap trees, we might as well consider only those trees in the language which have no $\blank$-labelled nodes occurring along linear segments in the tree. For a given set $\PV$ and $Y$, the number of such skinny-trees and their sizes are bounded. This bounds the number of states in an \EQSDA\ which accepts any language over heap skinny-trees. This also proves that for a given $\PV$ and $Y$, the number of \EQSDA s are bounded.


We next show the  following
result that every QSDA $\A$ can be \emph{uniquely over-approximated}
by a language of valuation trees (or equivalently formula trees) that can be
accepted by an \EQSDA\ $\AEL$.
We will refer to this construction, which we outline below, as \emph{elastification}.
This result is an extension of the unique over-approximation result for quantified data automata over words~\cite{CAVQDA}.
Using this result, we can show that elastic QSDAs form a finite join semi-lattice and there exists a Galois-connection  $\tuple{\alpha^{el}, \gamma^{el}}$
between QSDAs and the set of \EQSDA s.
This lets us define an abstract transfomer over the abstract domain \EQSDA s such that the semantics of a program can be computed over \EQSDA s (it terminates)
in a sound  manner.

Let $\A = (Q, \Pi, \Delta, \mathcal{T}, f)$ be a QSDA  such that $\Delta = (\Delta_0, \Delta_1, \ldots, \Delta_k)$ and for a state $q$ let
$R_{\blank}(q):= \{q' \mid q' = q \text{ or } \exists q''. q'' \in R_{\blank}(q) \text{ and } \Delta_1(q'', \blank) = q' \}$
be the set of states reachable from $q$ by a (possibly empty) sequence
of $\blank$-unary-transitions.  For a set $S \subseteq Q$ we let
$R_{\blank}(S) := \bigcup_{q \in S} R_{\blank}(q)$.

The set of states of $\AEL$ consists of sets of states of $\A$
that
are reachable by the following transition function $\Delta^{el}$ (where $\Delta_i(S_1, \ldots, S_i,a)$ denotes the
standard extension of the transition function of $\A$ to sets of
states):
{\footnotesize
\begin{align}
\Delta_0^{\text{el}}(a) =& ~R_{\blank}(\Delta_0(a)) \nonumber \\
\Delta_1^{\text{el}}(S, a) =&
\begin{cases}
R_{\blank}(\Delta_1(S,a)) & \mbox{if } a \not= \blank \\
S  & \mbox{if } a = \blank \mbox{ and $\Delta_1(q,\blank)$ is defined  for some $q \in S$} \\
\mbox{undefined}  & \mbox{otherwise.} \\
\end{cases} \nonumber \\
\Delta_i^{\text{el}}(S_1, \ldots, S_i, a) =& ~R_{\blank}(\Delta_i(S_1, \ldots, S_i, a )) \mbox{ for } i \in [2, k] \nonumber
\end{align}
}
\noindent Note that this construction is similar to the usual powerset construction
except that in each step we apply the transition function of $\A$ to the
current set of states and take the $\blank$-closure. However, if the
input letter is $\blank$ on a unary transition, $\AEL$ loops on the current
set if a $\blank$-transition is defined for some state in the set.

It can be argued inductively, starting from the leaf states, that the type for all states in a set is the same. Hence we define the type of a set $S$ as the type of any state in $S$.
The final evaluation formula for a set is the least upper bound of the
formulas for the states in the set:
$f^{\text{el}}(S) = \bigsqcup_{q \in S}f(q)$.
We can now show that $\AEL$  is the
\emph{most precise over-approximation} of the language of valuation trees accepted by QSDA $\A$.

\begin{theorem}\label{thmelastification}
For every QSDA $\A$, the \EQSDA\ $\AEL$ satisfies
$\Lval(\A) \subseteq \Lval(\AEL)$, and
for every \EQSDA\ $\B$ such that $\Lval(\A) \subseteq \Lval(\B)$, 
 $\Lval(\AEL) \subseteq \Lval(\B)$ holds.
\end{theorem}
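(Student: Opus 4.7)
The plan is to prove the two containments separately: over-approximation by a straightforward simulation argument, and most-preciseness by exploiting the $\blank$-self-loop structure of elastic automata to transport acceptance from $\blank$-stretched variants of the tree back to the original.

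For $\Lval(\A)\subseteq\Lval(\AEL)$, I would show by bottom-up induction on a valuation tree $t$ that if $\A$'s accepting run $\rho$ labels a node $v$ with state $q$, then the deterministic run of $\AEL$ labels $v$ with a set $S\ni q$. The base case is immediate from $\Delta_0^{\text{el}}(a)=R_\blank(\Delta_0(a))\ni\Delta_0(a)$. The branching and non-$\blank$ unary cases follow from the usual powerset inclusion combined with $R_\blank$-monotonicity. The only delicate case is a $\blank$-unary step: $\AEL$ loops on $S$, but every set on which $\Delta_1^{\text{el}}(\cdot,\blank)$ is defined is already $R_\blank$-closed (it arose from a previous closure), so $\Delta_1(q,\blank)\in S$ whenever $q\in S$. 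At the root, $q\in S$ gives $f(q)\sqsubseteq_{\F}\bigsqcup_{q'\in S}f(q')=f^{\text{el}}(S)$, which entails $f(q)\Rightarrow f^{\text{el}}(S)$; hence $t$'s data also satisfies $f^{\text{el}}(S)$ and $t\in\Lval(\AEL)$.

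For the second containment, fix an \EQSDA\ $\B$ with $\Lval(\A)\subseteq\Lval(\B)$ and let $\AEL$ reach set $S$ on a symbolic tree $t^{\text{sym}}$ with final formula $f^{\text{el}}(S)$. Two lemmas drive the argument. The \emph{provenance lemma} says: for every $q\in S$ there is a symbolic tree $t_q^{\text{sym}}$, differing from $t^{\text{sym}}$ only by stretching the lengths of maximal $\blank$-unary segments, on which $\A$'s run reaches $q$; this is proved by induction on the $\AEL$ construction, tracing each state added by an $R_\blank$-closure back to a concrete $\blank$-transition chain in $\A$, and at branching nodes selecting independent stretches in sibling subtrees. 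The \emph{elasticity lemma} says that $\B$ assigns the same state (and hence the same formula) to any two symbolic trees related by such $\blank$-stretching, which is an immediate induction on the number of inserted $\blank$s using $\Delta_1^{\B}(q,\blank)=q$. Combining these, let $\varphi_{\B}$ be $\B$'s formula on $t^{\text{sym}}$: for each $q\in S$, any valuation tree extending $t_q^{\text{sym}}$ with data satisfying $f(q)$ lies in $\Lval(\A)\subseteq\Lval(\B)$, and on such a tree $\B$ also assigns $\varphi_{\B}$ (by elasticity); since formulas refer only to data at variable positions, this forces $f(q)\sqsubseteq_{\F}\varphi_{\B}$. Taking least upper bounds over $q\in S$ gives $f^{\text{el}}(S)\sqsubseteq_{\F}\varphi_{\B}$, so any $t\in\Lval(\AEL)$ with symbolic part $t^{\text{sym}}$ also lies in $\Lval(\B)$.

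The hardest step is the provenance lemma. One has to maintain, through iterated powerset and $R_\blank$ steps, a concrete $\blank$-stretching witnessing each state in the current set, and at branching nodes recombine stretchings chosen independently in sibling subtrees without disturbing the symbolic structure at the branching node itself. Once the provenance witnesses are in place, the elasticity lemma and the subsequent $\sqsubseteq_{\F}$ argument are routine.
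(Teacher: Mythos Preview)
Your proposal is correct and follows essentially the same route as the paper's proof: a simulation argument for the first containment, and for the second, constructing for each $q\in S$ a witness tree in $\A$ that is $\blank$-equivalent to $t$, then using elasticity of $\B$ plus $\Lval(\A)\subseteq\Lval(\B)$ to force $f(q)\Rightarrow f_{\B}(p)$ and hence $f^{\text{el}}(S)\Rightarrow f_{\B}(p)$. One small refinement to your provenance lemma: the witness $t_q^{\text{sym}}$ may require not only \emph{inserting} $\blank$-letters (to realize an $R_{\blank}$-chain) but also \emph{removing} them (when $\AEL$ loops on $S$ at a $\blank$ while the chosen $q\in S$ has no $\blank$-transition or would otherwise move away), so ``stretching'' should be read as altering the length of each maximal $\blank$-segment in either direction---exactly as the paper phrases it, ``inserting and/or removing $\blank$-letters.''
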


A proof of Theorem~\ref{thmelastification} is presented in Appendix~\ref{app-elastification} and is similar to the proof of Theorem~3 in~\cite{CAVQDA} for the case of words.
The above theorem can also be stated over a language of formula trees in the same way, that $\AEL$ is the most precise
over-approximation of the language of formula trees accepted by QSDA $\A$.

Using this result, we next show that \EQSDA s form a finite join semi-lattice $(\mathcal{Q_F}^{el}, \sqsubseteq, \lub, \bot, \top)$. The partial order on \EQSDA s is the same as the partial order on QSDAs but now restricted to elastic QSDAs.
For two \EQSDA s $\A_1$ and $\A_2$, the join $\A_1 \lub \A_2$ is the unique \EQSDA\ over-approximating $\A_1 \lub_{\mathcal{Q_F}} \A_2$. The bottom and the top elements are the \EQSDA s taking every symbolic tree to the formulas \emph{false} and \emph{true} respectively.
We can now view the space of \EQSDA s as an abstraction over  QSDAs.
The abstraction function $\alpha^{el}: \mathcal{Q_F} \rightarrow \mathcal{Q_F}^{el}$  takes a QSDA $\A$ to its unique over-approximating \EQSDA\ $\AEL$. The concretization function $\gamma^{el}: \mathcal{Q_F}^{el} \rightarrow \mathcal{Q_F}$ is the identity function which maps an \EQSDA\ to itself. Recall that we had already restricted $\mathcal{Q_F}$ to contain only those QSDAs which accepted different languages (over heap skinny-trees). Since $\mathcal{Q_F}^{el}$ is a sub-space of $\mathcal{Q_F}$, this restriction extends to it in a natural way. With this assumption it is easy to see that $\tuple{\alpha^{el}, \gamma^{el}}$ forms a Galois-connection.

Let us define the abstract transformer $F_{el}^\sharp: \mathcal{Q_F}^{el} \rightarrow \mathcal{Q_F}^{el} = \alpha_{el} \circ F^\sharp \circ \gamma_{el}$. The soundness of $F_{el}^\sharp$ follows from the soundness of $F^\sharp$ (and the fact that $\tuple{\alpha^{el}, \gamma^{el}}$ form a Galois-connection). Similarly its monotonicity follows from the monotonicity of $F^\sharp$ and the monotonicity of $\alpha_{el}$ and $\gamma_{el}$. The semantics of a program can be thus computed over the abstract domain $\mathcal{Q_F}^{el}$ as the least fix-point of a set of equations of the form $\psi = F_{el}^\sharp (\psi)$.
Since there are a bounded number of \EQSDA s for a given set of program variables $\PV$ and universal variables $Y$, this least fix-point computation terminates (modulo the convergence of the data formulas in the formula lattice $\F$ in which case termination can be achieved by defining a suitable widening operator on the data formula lattice).

\begin{theorem}
The abstract semantics of a program, computed with respect to the abstract transformer $F_{el}^\sharp$, is computable and is correct.
\end{theorem}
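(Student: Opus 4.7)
The plan is to decompose the claim into its two distinct assertions: \emph{correctness} (soundness of the computed least fixpoint with respect to the concrete semantics) and \emph{computability} (termination of the fixpoint iteration). Both parts are largely a matter of assembling machinery already developed in the preceding sections; the genuine content lies in combining the finiteness of the elastic structural lattice with a widening on the data-formula lattice $\F$.

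For correctness, I would first observe that $F^\sharp$ was established to be sound and monotonic with respect to $F^\natural$ in the previous section, and that $\langle \alpha^{el}, \gamma^{el} \rangle$ forms a Galois connection between $\mathcal{Q_F}$ and $\mathcal{Q_F}^{el}$ by Theorem~\ref{thmelastification}. Since $F_{el}^\sharp = \alpha^{el} \circ F^\sharp \circ \gamma^{el}$ is a composition of monotonic maps, it is itself monotonic on the finite join semi-lattice $\mathcal{Q_F}^{el}$. Soundness of $F_{el}^\sharp$ with respect to $F^\natural$ then follows from the standard abstract interpretation argument: for any set of heap skinny-trees $\mathcal{H}$, we have $F^\natural(\mathcal{H}) \subseteq \gamma(F^\sharp(\alpha(\mathcal{H})))$, and precomposing/postcomposing with $\gamma^{el}$ and $\alpha^{el}$ preserves the inequality because each step is order-preserving and $\gamma^{el} \circ \alpha^{el}$ is extensive. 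Knaster--Tarski then guarantees a least fixpoint $\psi^*$ of $F_{el}^\sharp$, and a straightforward transfinite induction on the iterates shows $\psi^*$ over-approximates the concrete least fixpoint of $F^\natural$.

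For computability, the key observation is that, as argued immediately before the theorem, the set of \EQSDA s over fixed $\PV$ and $Y$ is bounded \emph{modulo} the data-formula annotations: the structural quotient (states, transitions, and type map) ranges over a finite set. So any ascending chain in $\mathcal{Q_F}^{el}$ can only make progress either by changing the structural skeleton (which can happen only finitely often) or by strictly increasing some data formula on the skeleton's states. The first source of divergence is controlled automatically; the second is controlled by assuming, as is standard, a widening operator $\nabla_\F$ on the numerical data-formula lattice $\F$, which I lift pointwise to a widening $\nabla$ on \EQSDA s that agree on their structural skeleton (and falls back to join when skeletons differ, which can occur only finitely often). With this widening, any ascending chain of the form $\psi_0 \sqsubseteq \psi_1 \nabla F_{el}^\sharp(\psi_1) \sqsubseteq \cdots$ stabilises after finitely many steps, yielding a post-fixpoint of $F_{el}^\sharp$ that soundly over-approximates its least fixpoint.

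The hardest part of this argument is the clean separation of the structural and data components inside $\mathcal{Q_F}^{el}$ so that the widening on $\F$ can be transported to \EQSDA s without breaking monotonicity or soundness; this is precisely where the canonicity theorem (Theorem~\ref{thm-canonical}) and the view of a \QSDA\ as a Moore machine mapping symbolic trees to formulas pay off, because they let us identify two \EQSDA s with the same structural skeleton and apply $\nabla_\F$ state-by-state. Once this lifting is in place, soundness of $\nabla$ with respect to $\sqsubseteq$ follows from soundness of $\nabla_\F$, and the combined termination-plus-soundness argument delivers the theorem.
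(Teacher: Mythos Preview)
Your proposal is correct and follows essentially the same approach as the paper: soundness is inherited from $F^\sharp$ via the Galois connection $\langle \alpha^{el}, \gamma^{el} \rangle$, monotonicity comes from composing monotone maps, and termination is obtained by combining the finiteness of the structural part of $\mathcal{Q_F}^{el}$ with a widening on the data-formula lattice $\F$. The paper's own argument is in fact just the paragraph immediately preceding the theorem and is less detailed than yours; in particular, your explicit discussion of lifting $\nabla_\F$ state-by-state via the Moore-machine view (and falling back to join when skeletons differ) spells out what the paper leaves implicit when it says termination ``can be achieved by defining a suitable widening operator on the data formula lattice.''
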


\subsection{From \EQSDA s to a Decidable Fragment of \Strand}
\label{sec-translation}

\EQSDA s introduced in  the previous section can express quantified data invariants over acyclic singly-linked data structures. In this section we show that \EQSDA s have a nice property that the quantified invariants expressed by them fall in a decidable fragment of first order logic, in particular the decidable fragment of \Strand. Hence, once the fix-point computation has converged, the invariants expressed by the \EQSDA s can be used to validate assertions in the program using decision procedures. In fact, the automaton model for \EQSDA s was chosen keeping in mind the decidability of the invariants expressed by them.

Given an \EQSDA\ $\A$ we would like to translate it to an equivalent formula $I$ such that the set of heap skinny-trees accepted by $\A$ corresponds to the program configurations which model $I$.
Recall that for any $k-$skinny-tree $\HC$ accepted by an \EQSDA, the number of branching nodes in $\HC$ is bounded by $k$. The invariants $I$ expressed by an \EQSDA\ are quantified formulas of the form $\exists b_1 \ldots b_k. \forall y_1 \ldots y_{\ell}. \varphi$ such that, in any model satisfying $I$, the existential variables $B = \{b_1, \ldots, b_k\}$ are always instantiated with the branching nodes in $I$.
The first step of the translation associates an existential variable $b_i$ with every state of the automaton which has more than one child (and thus represents a branching point). Then we enumerate all simple (loopless) paths in the automaton starting from a leaf state, say $q_i$, to a final state, say $q_f$, and record the structural constraints over these linear segments, $\phi_{if}(B, PV, Y)$, which capture the relative positions (over relations $\next$, $\next^+$ and $\next^*$) of the pointer/universal and branching variables with respect to each other and the data formula annotated at the final state $f(q_f)$.
These structural constraints can be constructed as described in~\cite{CAVQDA}. After consdering every pair of such states, the formula corresponding to an \EQSDA\ can be expressed as
\[I =
\exists  b_1 \ldots b_k. \forall y_1 \ldots y_{\ell}. \big(
\bigwedge_f (\wedge_i \phi_{if} \Rightarrow f(q_f)) \wedge (\bigvee_f \wedge_i \phi_{if} )
\big)
\]
A key property in the decidable fragment of \Strand is that universal quantification is not permitted to be over elements that are only a bounded distance away from each other. See~\cite{CAVQDA} for a proof that the structural constraints $\phi_{if}$ are such that $I$ falls in the decidable fragment of \Strand.

\section{Experimental Evaluation}

We implemented the abstract domain over \QSDA s and \EQSDA s presented in this paper, and evaluated them on several list-manipulating programs. We now first present the implementation details followed by our experimental results.
Our prototype implementation along with the experimental results and programs can be found at \url{http://web.engr.illinois.edu/~garg11/qsdas.html}.


\paragraph{\bf Implementation details.}
Given a program $P$ we compute the abstract semantics of the program over the abstract domain $\mathcal{Q_F}^{el}$ over \EQSDA s.
A program is a sequence of statements as defined by the grammar in Figure~\ref{grammar}. In addition to those statements, a program is also annotated with a pre-condition and a bunch of assertions.
The pre-condition formulas belong to a fragment of \Strand over lists and can express quantified properties like sortedness of lists, etc. Given a pre-condition formula $\varphi$, we construct the smallest \EQSDA\ (with respect to the partial-order defined on the \QSDA s) which accepts all the heap skinny-trees which satisfy $\varphi$.
This \EQSDA\ gives us an abstraction of the initial configurations of the program. Starting from these configurations we compute the abstract semantics of the program over $\mathcal{Q_F}^{el}$. The assert statements in the program are ignored during the fix-point computation. Once the convergence of the fix-point has been achieved, the \EQSDA s can be converted back into decidable \Strand formulas (as described in Section~\ref{sec-translation}) and the \Strand decision procedure can be used for validating the assertions.

We recall that the abstract domain $\mathcal{Q_F}^{el}$ is an abstraction of $\mathcal{Q_F}$. So, as much as possible, we want to compute the abstract semantics over the more concrete domain out of the two, i.e. $\mathcal{Q_F}$.
Therefore, for every statement in the program we apply the abstract transformer $F^\sharp$  (and not the more abstract $F_{el}^\sharp$). The intermediate semantic facts (\QSDA s) in our analysis are thus not necessarily elastic.
However to ensure convergence of the analysis, the \QSDA s at the header of the loops are first abstracted to elastic \QSDA s using $\alpha^{el}$ before the join.


Our abstract domains are parameterized by a quantifier-free domain $\F$ over the data formulas. In our experiments, we instantiate $\F$ with the octagon abstract domain~\cite{octagon} from the Apron library~\cite{apron}. It is sufficient to capture the pre/post-conditions and the invariants of all our programs.

\begin{table*}[thb]
	\centering

{\scriptsize
\begin{tabular}{|l| r | r | r | r|| r | r |r||}
	\hline
	~~~Programs & \#PV & \#Y & \#DV & Property~~ & \#Iter & Max. size & Time (s) \\
	 & & & & checked~~~ & & of QSDA & \\  \hline \hline

{\sc init}	& 2 & 1 & 1 & {\sc Init, List} & 4 & 19 & 0.0 \\	\hline
{\sc add-head} & 2 & 1 & 1 & {\sc Init, List} & - & 11 & 0.1  \\ \hline
{\sc add-tail}  & 3 & 1 & 1 & {\sc Init, List} & 4 & 29 & 0.1  \\ \hline
{\sc delete-head} & 2 & 1 & 1 & {\sc Init, List} & - & 10 & 0.0 \\ \hline
{\sc delete-tail}  & 4 & 1 & 1 & {\sc Init, List} & 5 & 51 & 0.5  \\ \hline
{\sc max}	& 2 & 1 & 1 & {\sc Max, List} & 4 & 19 & 0.1   \\ \hline
{\sc clone}    & 4 & 1 & 1 & {\sc Init, List} & 4 & 44 & 0.7  \\ \hline
{\sc fold-clone}  & 5 & 1 & 1 & {\sc Init, List} & 5 & 57 &  3.2 \\ \hline
{\sc copy-Ge5}  &  4 & 1 & 0 & {\sc Gek, List} & 9 & 53 & 2.6  \\ \hline
{\sc fold-split} & 3 &  1 & 1 & {\sc Gek, List} & 4 & 33 & 0.3 \\ \hline
{\sc concat}  & 4 & 1 & 1 & {\sc Init, List} & 5 & 44 & 0.7 \\ \hline
{\sc sorted-find} & 2 & 2 & 2 & {\sc Sort, List} & 5 & 38 & 0.3 \\ \hline
{\sc sorted-insert} & 4 & 2 & 1 & {\sc Sort, List} & 6 & 163 & 5.8 \\ \hline
{\sc bubble-sort} & 4 & 2 & 1 & {\sc Sort, List} & 5/18 & 191 & 42.8 \\ \hline
{\sc sorted-reverse} & 3 & 2 & 0 & {\sc Sort, List} & 5 & 43 & 1.5 \\ \hline
{\sc expressOS-lookup-prev} & 3 & 2 & 1 & {\sc Sort, List} & 6 & 73 & 2.2 \\ \hline \hline

{\sc gslist-append} & 4 & 0 & 1 & {\sc List} & 8 & 3 & 0.0 \\ \hline
{\sc gslist-prepend} & 2 & 0 & 1 & {\sc List} & - & 3 & 0.0  \\ \hline
{\sc gslist-last} & 3 & 0 & 0 & {\sc Last, List} & 3 & 7 & 0.0 \\ \hline
{\sc gslist-free} & 3 & 0 & 0 & {\sc Empty, List} & 1 & 3 & 0.0 \\ \hline
{\sc gslist-position} &4 & 0 & 0 & {\sc List} & 3 & 13 &  0.0 \\ \hline
{\sc gslist-reverse} & 3 & 0 & 0 & {\sc List} & 3 & 5 & 0.0 \\ \hline
{\sc gslist-custom-find} &3 & 1 & 1 & {\sc Gek, List}  & 4 & 29 & 0.1 \\ \hline
{\sc gslist-nth} & 3 & 0 & 1 & {\sc List} & 3 & 7 & 0.0 \\ \hline
{\sc gslist-remove} & 4 & 0 & 1 & {\sc List} & 4 & 10 & 0.0 \\ \hline
{\sc gslist-remove-link} & 5 & 0 & 0 & {\sc List} & 4 & 16 & 0.0 \\ \hline
{\sc gslist-remove-all} & 5 & 1 & 1 & {\sc Gek, List} & 5 & 51 & 0.6 \\ \hline
{\sc gslist-insert-sorted} & 5 & 2 & 1 & {\sc Sort, List} & 6 & 279 & 27.4 \\ \hline

	\end{tabular}
}
	\caption{\small Experimental results. Property checked --- {\sc List}: the return pointer points to a list; {\sc Init}: the list is properly initialized with some key; {\sc Max}: returned value is the maximum of all data values in the list; {\sc Gek}: the list (or some parts of the list) have data values greater than or equal to a key $k$; {\sc Sort}: the list is sorted; {\sc Last}: returned pointer is the last element of the list; {\sc Empty}: the returned list is empty.}
	\label{results}
\end{table*}


\paragraph{\bf Experimental results} We evaluate our abstract domain on a suite of list-manipulating programs (see Table~\ref{results}). For every program we report the number of pointer variables (PV), the number of universal variables (Y), the number of data variables (DV) and the property being checked for the program. We also report the number of iterations required for the fixed-point to converge, the maximum size of the intermediate QSDAs and finally the time taken, in seconds, to analyze the programs.

The names of the programs in Table~\ref{results} are descriptive, and we only describe some of them. The program {\sc copy-Ge5} is from~\cite{celia} and copies from a list only those entries into a new list whose data value is greater than or equal to $5$. Similarly, the program {\sc fold-split}~\cite{celia} splits a list into two lists--one which has only those entries whose data values are greater than or equal to a key $k$ and the other list with entries whose data value is less than $k$. The program {\sc expressOS-lookup-prev} is a method from the module cachePage in a verified-for-security platform for mobile applications~\cite{asplos13}. The module cachePage maintains a cache of the recently used disc pages as a priority queue based on a sorted list. This method returns the correct position in the cache at which a disc page could be inserted.
The programs in the second part of the table are various methods adapted from the Glib list library which comes with the GTK+ toolkit and the Gnome desktop environment. The program {\sc gslist-custom-find} finds the first node in the list with a data value greater  or equal to $k$ and {\sc gslist-remove-all} removes all elements from the list whose data value is greater  or equal to $k$. The programs {\sc gslist-insert-sorted} and {\sc sorted-insert} insert a key into a sorted list.

All experiments were completed on an Intel Core i5 CPU at 2.4GHz with 6Gb of RAM. The number of iterations is left blank for programs which do not have loops.
{\sc bubble-sort} program converges on a fix-point after 18 iterations of the inner loop and 5 iterations of the outer loop.
The size of the intermediate QSDAs depends on the number of universal variables and  the number of pointer variables and largely governs the time taken for the analysis of the programs.
For all programs, our prototype implementation computes their abstract semantics in reasonable time.
Moreover we manually verified that the final \EQSDA s in all the programs were sufficient for proving them correct (this validity check for assertions can be mechanized in the future). 
The results show that the abstract domain we propose in this paper is reasonably efficient and powerful enough to prove a large class of programs manipulating singly-linked list structures.

\newpage
\bibliographystyle{abbrv}
\bibliography{sample}
\newpage
\appendix

\section{Formal Semantics of Programs}\label{semantics}

In this appendix we describe the concrete semantics of the primitive statements in our programming language defined in Figure~\ref{grammar}.
%
Let us assume that there is a special program configuration $Error$ to which the progam transitions to on encountering a memory error. 

\begin{definition}[Strongest post-condition $F^\natural$]
Let $\mathcal{C} = (pc, H, pval, dval)$ be a non-$Error$ program configuration where $H = (Loc, \next, \data)$. Then for any statement  $s$, $\mathcal{C'}$ is the strongest post-condition of $\mathcal{C}$ with respect to the statement $s$, written as $\mathcal{C} \xrightarrow{\mathit{s}}_P \mathcal{C'}$, iff $\mathcal{C'} = (pc', H', pval', dval')$ where $H' = (Loc', \next', \data')$ and $pc'$ is the updated program counter and 
one of the following holds:
%

\begin{itemize}

\item $s = p_i := p_j$ (or $s = p_i := nil$) and  $H' = H$ and $dval' = dval$ and $pval' = pval[p_i/pval(p_j)]$ (or $pval' = pval[p_i/pval(nil)])$.


\item $s = p_i := p_j\rightarrow\next$ and $H' = H$ and $dval' = dval$ and if $pval(p_j) = v$ and $(v, u) \in \next$, then $pval' = pval[p_i/u]$. If $v = dirty$ or $v = nil$, then $\mathcal{C'}$ is \emph{Error}.

\item $s = p_i\rightarrow\next := p_j$ (or $s = p_i\rightarrow\next := nil$) and $pval' = pval$ and $dval' = dval$ and $Loc' = Loc$ and $\data' = \data$ and if $pval(p_i) = v$ and $(v, u) \in \next$ and if $pval(p_j) = w$ (or $pval(nil) = w$), then $\next' = \next \backslash \{(v, u)\} \cup \{(v, w)\}$. If $v = nil$ or $v = dirty$, then $\mathcal{C'}$ is $Error$.

\item $s = \text{new } p_i$ and $dval' = dval$ and $\Loc' = \Loc \cup \{v\}, v \notin \Loc$, $\next' = \next \cup \{(v, dirty)\}$ and $\data' = \data$ and $pval' = pval[p_i/v]$.

\item $s = d_i := p_i\rightarrow\data$ and $H' = H$ and $pval' = pval$ and  if $pval(p_i) = v$, then $dval' = dval[d_i/\data(v)]$.
If $v = nil$ or $v = dirty$ then $\mathcal{C'}$ is $Error$.

\item $s = p_i\rightarrow\data := data\_expr$ and $pval' = pval$ and $dval' = dval$ and $Loc' = Loc$ and $\next' = \next$ and  if $pval(p_i) = v$ then $\data' = \data[v/data\_expr]$.
If $v = nil$ or if $v = dirty$ then $\mathcal{C'}$ is $Error$.

\item $s = \texttt{skip}$ and $\mathcal{C'} = \mathcal{C}$.

\item $s = \text{assume } (\psi_{struct}) $ and $\mathcal{C'} = \mathcal{C}$ and $\mathcal{C} \models \psi_{struct}$.

\item $s = \text{assume } (\psi_{data})$ and $\mathcal{C'} = \mathcal{C}$ and $\mathcal{C} \models \psi_{data}$.

\end{itemize}

\end{definition}

\section{Proof of Theorem~\ref{thmelastification}}\label{app-elastification}

Note that $\AEL$ is elastic by definition of $\Delta_1^{\text{el}}$. It
is also clear that $\Lval(\A) \subseteq \Lval(\AEL)$ because for each
run of $\A$ using states $q_0 \cdots q_n$ the run of $\AEL$ on the
same input uses sets $S_0 \cdots S_n$ such that $q_i \in S_i$, and by
definition $f(q_n)$ implies $f^{\text{el}}(S_n)$.

Now let $\B$ be an EQSDA with $\Lval(\A) \subseteq \Lval(\B)$. Let $t$ be a valuation tree 
accepted by $\AEL$ and let $S$ be the
state of $\AEL$ reached on reading $t$.  We want to show that $t \in
\Lval(\B)$.  Let $p$ be the state reached in $\B$ on $t$. We show that
$f(q)$ implies $f_\B(p)$ for each $q \in S$. From this we obtain
$f^{\text{el}}(S) \Rightarrow f_\B(p)$ because $f^{\text{el}}(S)$ is
the least formula that is implied by all the $f(q)$ for $q \in S$.


Pick some state $q \in S$. By definition of $\Delta^{\text{el}}$ we
can construct a valuation tree $t' \in \Lval(\A)$ that leads to the state $q$ in
$\A$ and has the following property: if all letters of the form
$(\blank,d)$ and those that have a single child are removed from $t$ and from $t'$, then the two
remaining trees have the same symbolic trees. In other words, $t$ and $t'$ can be
obtained from each other by inserting and/or removing
$\blank$-letters.

Since $\B$ is elastic, $t'$ also leads to $p$ in $\B$. From this we
can conclude that $f(q) \Rightarrow f(p)$ because otherwise there
would be a model of $f(q)$ that is not a model of $f(p)$ and by
changing the data values in $t'$ accordingly we could produce an input
that is accepted by $\A$ and not by $\B$.
\qed

\section{Construction of the Strengthen Operator}\label{app-strengthen}

Given a QSDA $\A$, we give below a high level sketch of how to construct the QSDA $\A'$ accepting the following language of formula words: $(Strengthen)~ y~ L_f(\A)$.
The construction of the QSDA $\A'$ takes place in two steps. 

In the first step, we construct a QSDA $\A_1$ which accepts ($\Sigma \times (Y \cup \{-\} \backslash \{y\})$)-labelled formula trees of the form $\big\{(t' \downharpoonright_y, \phi) \mid \phi: \bigsqcap \{ \exists d. \varphi[y\rightarrow\data/d] \mid (t, \varphi) \in L_f(\A), t \downharpoonright_y = t' \downharpoonright_y \} \big\}$. 

And in the second step, we take the cross-product of this automaton $\A_1$ with the initial automaton $\A$ to get the QSDA $\A'$ such that if the symbolic tree $t' \downharpoonright_y$ is mapped by $\A_1$ to data-formula $\phi$ and the symbolic tree $t'$ is mapped by $\A$ to the data-formula $\varphi''$, then $t'$ is mapped in the new automata $\A'$ to the data-formula $\phi \sqcap \varphi''$.
The required cross-product of two QSDAs is similar to the algorithm for computing intersection of tree automata.
The automata which computes the cross-product simulates the transitions of both the automata $\A$ and $\A_1$. 
However, since $\A_1$ accepts trees which do not have the variable $y$ labeling them, the cross-product automata on a label $\pi$ which contains $y$ simulates the transitions of $\A$ on the label $\pi$, but simulates the transition of $\A_1$ on the label $\pi \backslash \{y\}$.
 The states in the cross-product automaton are labeled with the meet of the data-formulas labeling the states of the two individual automata. 

So now let us describe the construction of the automaton $\A_1$. The QSDA $\A_1$ accepts symbolic trees of the form $t \downharpoonright_y$ if $(t, \varphi) \in L_f(\A)$. Hence, $\A_1$ is an automaton which simulates the transitions of $\A$ on a symbolic tree, except when it reads a node in the tree labeled with label $\pi$ which contains variable $y$. When this happens, $\A_1$ simulates the transitions of $\A$ on the label $\pi \backslash \{y\}$.
The data-formulas mapped to each state in $\A_1$ is obtained by existentially quantifying out $y\rightarrow\data$ from the data-formulas mapping the corresponding states in $\A$. 
Note that the transition relation of $\A_1$ we just described might be non-deterministic; the same symbolic tree $t \downharpoonright_y$ might be mapped to more than one data-formula eg. $\exists d. \varphi_1[y\rightarrow\data/d]$ and $\exists d. \varphi_2[y\rightarrow\data/d]$ and so on. We want the automata $\A_1$ to map $t \downharpoonright_y$ to the meet of all these formulas. This can be achieved by determinizing the transition relation of the QSDA, very similar to the determinization procedure of a tree automata. In addition, for a set of states in the deterministic automata, we label it with the meet of the data-formulas labeling each state in the set in the original non-deterministic automata.
In this way, the transition relation could be determinized to obtain QSDA $\A_1$. This completes the construction of $\A'$.

\section{Construction of the Abstract Transformer}\label{app-construction}

First let us introduce some preliminary notation.
For a set S, let $S^i$ denote the $i^{th}$ cartesian power of $S$.
For an n-tuple $x = (x_1, \ldots, x_n) \in X = X_1 \times \ldots \times X_n$, let us define $x \downarrow_i$ for $1 \leq i \leq n$ as the projection onto the $i^{th}$ component of the tuple i.e. $(x_1, \ldots, x_n)\downarrow_i = x_i$. 
When the set $X_i$ can be uniquely distinguished from all other components $X_j$ of the $n$-tuple $X$, $x \downarrow_{X_i} = x_i$ is also used to denote the $X_i^{th}$ component of the tuple $x$. For a tuple $x$, $x' = x[X_1/x_1, \ldots, X_k/x_k]$ is used to denote the tuple which is same as $x$ except at the $X_1, \ldots, X_k$ components where the tuple $x'$ takes the value $x_1, \ldots, x_k$ respectively.

For a function $F: D \rightarrow X$, where $X = X_1 \times \ldots \times X_n$, which maps a domain $D$ to an $n$-tuple $X$, let $F\downarrow_i: D \rightarrow X_i$ for $1 \leq i \leq n$ be defined such that $F\downarrow_i(d) = F(d)\downarrow_i$ for all $d \in D$.
Also for a function $f$, let us denote $f \mid_D$ as the function which is same as $f$ but with its domain restricted to the set $D$. for a given function $f$, let us also define $f' = f[d_1/v_1, \ldots, d_k/v_k]$ as the function which is same as $f$ except at the domain elements $d_1$, \ldots, $d_k$ where $f'$ takes the value $v_1$, \ldots, $v_k$.

We now present the construction of the abstract transformer for each case of the statement $stmt$. 
Let the input \QSDA\ be of the form $(Q, \Pi, \Delta, \mathcal{T}, f)$ where $\Pi = (\Sigma \times (Y \cup \{-\}))$ and $\Sigma = 2^{PV}$. Let us view $\Sigma$ as a set of boolean vectors where the $i^{th}$ bit in a vector indicates whether the pointer $p_i$ belongs to the vector or not.
The output \QSDA\ after the execution of $stmt$ is of the form $(Q', \Pi, \Delta', \mathcal{T'}, f')$ where:
\newline\newline
{\bf Case 1} ($s = p_i := p_j$): 
The evaluation function $f'$ of the automaton depends on whether $p_j$ co-occurs with ${nil}$ or not. To facilitate this, we split each state into two states $Q' = Q \cup \{(q, nil)~|~q \in Q)\}$. 
 Regarding the transition relation, for all transitions $\Delta(q_1, .., q_p, \pi) = q$,
\begin{itemize}
\item if $p_j \not\in \mathcal{T}(q)$, $\Delta'(q_1, .., q_p, \pi[p_i/0]) = q$.
\item if $p_j \in \mathcal{T}(q)$ and $\pi \downarrow_{p_j} = 1$ and $\pi \downarrow_{p_{nil}} = 1$, then $\Delta'(q_1, .., q_p,$ $\pi[p_i/1])$ $  = (q, nil))$.
\item if $p_j \in \mathcal{T}(q)$ and $\pi \downarrow_{p_j} = 1$ and $\pi \downarrow_{p_{nil}} = 0$, then $\Delta'(q_1, .., q_p,$ $\pi[p_i/1])$ $ = q$.

\item if $p_j \in \mathcal{T}(q)$ and $\pi \downarrow_{p_j} = 0$ then there exists a state $q_j$, $1 \leq j \leq p$ such that $p_j \in \mathcal{T}(q_j)$. Correspondingly, we add transitions $\Delta'(q_1, ..., q_j, ..., q_p,$ $\pi[p_i/0])$ $ = q$ and $\Delta'(q_1, ..., (q_j, nil), ..., q_p,$ $\pi[p_i/0])$ $ = (q, nil)$.
\end{itemize}

The type $\mathcal{T'}$ for every state $\in Q'$ is:

$\mathcal{T'}(q) = \mathcal{T'}(q, nil)= \begin{cases}
        \mathcal{T}(q) \cup \{p_i\} & \text{if $p_j \in \mathcal{T}(q), p_i \notin \mathcal{T}(q)$ }\\
	  \mathcal{T}(q) \backslash \{p_i\} & \text{if $p_j \notin \mathcal{T}(q), p_i \in \mathcal{T}(q)$ }\\
	  \mathcal{T}(q) & \text{otherwise}\\
      \end{cases}
$\\
The evaluation function depends on whether the pointer $p_j$ is $nil$ or not. 
Hence, $f'(q) = \exists d. f(q)[p_i\rightarrow\data/d] \sqcap (p_i\rightarrow\data = p_j\rightarrow\data)$. Otherwise, $f'(q, nil) = \exists d. f(q)[p_i\rightarrow\data/d]$.
\newline\newline
{\bf Case 2} ($s = p_i := p_m \rightarrow next$): Firstly, $Q' = \{q~|~ q \in Q, p_m \notin \mathcal{T}(q) \} \cup \{(q,*), (q, v)~|~q \in Q, p_m \in \mathcal{T}(q), v \in PV \cup Y \cup \{-\}\}$. The automaton transitions to a state of the form $(q,*)$ on reading the pointer variable $p_m$. This is like guessing the state in which the automaton transitions to, on reading variable $p_m$. After reading $p_m$, the automaton
reads variable $p_i$ and transitions from state $(q,*)$ to a state of the form $(q',v)$ where $v$ is basically a pointer variable or a universal variable which is co-read with $p_i$. The variable $v$ is used later when defining the evaluation functions for states $Q'$. 

More formally, for all transitions $\Delta(q_1, ..., q_p,\pi) = q$,
\begin{itemize}
\item if $p_m \notin \mathcal{T}(q)$ then $\Delta'(q_1, ..., q_p, \pi[p_i/0]) =  q$.
\item if $p_m \in \mathcal{T}(q)$ and $\pi \downarrow_{p_m} = 1$ then $\Delta'(q_1, ..., q_p$ $, \pi[p_i/0]) = (q,*)$.
\item if $p_m \in \mathcal{T}(q)$ and $\pi \downarrow_{p_m} = 0$ then there exists a state $q_m \in \{q_1, ..., q_p\}$ such that $p_m \in \mathcal{T}(q_m)$. 
Accordingly we add the transition, $\Delta'(q_1, ..., (q_m,*), ..., q_p,$ $\pi[p_i/1]) = (q,v)$ if there exists a variable $v \in PV \cup Y$ such that $\pi \downarrow_{v} = 1$ and $\pi \downarrow_{nil} = 0$. Otherwise the transition  $\Delta'(q_1, ..., (q_m,*), ..., q_p, \pi[p_i/1]) = (q,-)$.

This covers the case when state $q_m$ was reached immediately after reading variable $p_m$. For the other case, we add transitions 
$\Delta'(q_1$ $, ...,$ $(q_m,v)$ $, ..., q_p$ $, \pi[p_i/0]) = (q,v)$ for all $v \in PV \cup Y \cup \{-\}$.
\end{itemize}

Note that the final evaluation formula is only associated with states of the form $(q, v)$ or $ (q,-)$. For all $q \in Q$ such that $p_m \in \mathcal{T}(q)$ and $v \in PV \cup Y$, $f'(q,v) = \exists d. f(q)[p_i\rightarrow\data/d] \sqcap (p_i\rightarrow\data = v\rightarrow\data)$. Otherwise, $f'(q,-) = \exists d. f(q)[p_i\rightarrow\data/d]$.
For all other states, the evaluation formula is $false$.

Finally, for all $q' \in Q'$, the type associated with $q'$ is given as:\\
$\mathcal{T'}(q') = \begin{cases}
        \mathcal{T}(q') \backslash \{p_i\} & \text{if $q' \in Q, p_m \notin \mathcal{T}(q')$ }\\
	  \mathcal{T}(q) \backslash \{p_i\} & \text{if $q' = (q,*), q \in Q$ }\\
	  \mathcal{T}(q) \cup \{p_i\}& \text{if $q' = (q, v), q \in Q, v \in PV \cup Y \cup \{-\}$}\\
							
      \end{cases}
$\\
\newline\newline
{\bf Case 3} ($s = new$ $p_i$): The statement $s$ allocates a new node which is pointed to by variable $p_i$ and is added as a child to the root of data trees accepted by the original automaton. The universal variables, apart from the nodes already present in the data trees, now also have to valuate over the newly allocated node. The set of states of the new automaton is $Q' = (Q \cup \{\hat{q}\}) \times \hat{Y}$ where $\hat{q} \notin Q$. The states of the form $(Q \cup \{\hat{q}\}, y)$ are used to accept valuation trees where univeral variable $y$ valuates over the newly allocated node whereas the other universal variables valuate over the existing nodes present in the heap tree. The states $(Q \cup \{\hat{q}\}, -)$ are used to accept valuation trees where none of the universal variable valuates to the newly allocated node. The state $\hat{q}$ is used to transition the automaton to a special state on reading the new node labeled with pointer variable $p_i$ i.e. $\Delta'((\{p_i\}, \hat{y})) = (\hat{q}, \hat{y})$ for all $\hat{y} \in \hat{Y}$. Also for all transitions $\Delta(q_1, ..., q_p, \pi) = q$,
\begin{itemize}
\item if $\pi \downarrow_{{root}} = 0$, $\Delta'((q_1, \hat{y}), ..., (q_p, \hat{y}), \pi[p_i/0, \hat{y}/0]) = (q, \hat{y})$ for all $\hat{y} \in \hat{Y}$. 
\item if $\pi \downarrow_{{root}} = 1$, $\Delta'((q_1, \hat{y}), ..., (q_p, \hat{y}), (\hat{q}, \hat{y}), \pi[p_i/0, \hat{y}/0]) =  (q, \hat{y})$ for all $\hat{y} \in \hat{Y}$.
\end{itemize}

The final evaluation formula is given as: \\
$f'(q, \hat{y}) = \begin{cases}
        \exists d_1 d_2. f(q)[p_i\rightarrow\data/d_1, y\rightarrow\data/d_2] \sqcap (p_i\rightarrow\data = y\rightarrow\data)\\
        ~~~~~~~~~~~~~~~~~~~~~~~~~~~~~~~~~~~~~~~~~~~~~ \text{if $q \in Q, \hat{y} = y \in Y$}\\
        \exists d. f(q)[p_i\rightarrow\data/d]   ~~~~~~~~~~~~~~~~~ \text{if $q \in Q, \hat{y} = -$}\\
        \text{\emph{false}}    ~~~~~~~~~~~~~~~~~~~~~~~~~~~~~~~~~~~~~~~ \text{if $q = \hat{q}$}\\
      \end{cases}
$\\

Also the types for each state in the new automaton are:\\
$\mathcal{T'}(q, \hat{y}) = \begin{cases}
	  \{p_i\} & \text{if $q = \hat{q}, \hat{y} = -$}\\
	  \{p_i, y\} & \text{if $q = \hat{q}, \hat{y} = y \in Y$}\\
        \mathcal{T}(q) \backslash \{p_i\} & \text{if $q \in Q, {root} \notin \mathcal{T}(q), \hat{y} = -$ }\\
        \mathcal{T}(q) \backslash \{p_i, y\} & \text{if $q \in Q, {root} \notin \mathcal{T}(q), \hat{y} = y \in Y$ }\\
	  \mathcal{T}(q)  & \text{if $q \in Q, {root} \in \mathcal{T}(q)$ }\\
	  
      \end{cases}
$\\
\newline\newline
{\bf Case 4} ($s = p_m \rightarrow next := p_i$):
Firstly, $Q' = Q \cup (Q,*) \cup \{(q_1, q_2)~|~q_1, q_2 \in Q, p_i \in \mathcal{T}(q_1) \text{\emph{ iff }} p_m \notin \mathcal{T}(q_1)\}$. 
From the semantics of the strongest-post, we know that the new automaton removes any sub-tree rooted at $p_m$ and attaches it as an additional child to a node labelled with variable $p_i$.
States of the form $(Q,*)$ are special states in which the automaton transitions to on reading the variable $p_m$.
If $(q_2, *)$ accepts a tree $\tau_m$ rooted at $p_m$ then the state $(q_1, q_2)$ where $p_i \in \mathcal{T}(q_1)$ accepts a tree which has $\tau_m$ as an additional child to an internal node labelled with $p_i$. On the other hand, if $p_m \in \mathcal{T}(q_1)$ then $(q_1, q_2)$ accepts a tree which had its subtree $\tau_m$, rooted at $p_m$, removed.
Describing the transition relation in detail, for a transition $\Delta(q_1, ..., q_p, \pi) = q$:
\begin{enumerate}
\item if $\{p_i, p_m\} \cap \mathcal{T}(q) = \phi$,  then we add the same transition to the new automaton i.e. $\Delta'(q_1, ..., q_p, \pi) = q$.
\item if $p_m \in \mathcal{T}(q)$ and $p_i \notin \mathcal{T}(q)$
    \begin{itemize}
	\item and if $\pi \downarrow_{p_m} = 1$ then the automaton should transition to a state of the form $(Q,*)$; therefore $\Delta'(q_1, ..., q_p, \pi) = (q,*)$. 
  	\item otherwise, there exists a state $q_m \in \{q_1, ..., q_p\}$ such that $p_m \in \mathcal{T}(q_m)$. In case $q_m$, in the original automaton, accepted trees rooted at $p_m$, the new automaton should remove $q_m$ from the left hand side of the transition and should transition to a state $(q, q_m)$ via $\Delta'(q_1,$ $..,$ $q_{m-1},$ $q_{m+1},$ $...,$ $q_p, \pi) = (q,q_m)$. To handle the other case, where $q_m$ accepted trees which were not rooted at $p_m$, the transitions $\Delta'(q_1,$ $...,$ $(q_m,\hat{q}),$ $...,$ $q_p,$ $\pi) = (q,\hat{q})$ are added for all $\hat{q} \in Q$.
    \end{itemize}

\item if $p_i \in \mathcal{T}(q)$ and $p_m \notin \mathcal{T}(q)$
	\begin{itemize}
	\item and if $\pi \downarrow_{p_i} = 1$ then the new automaton should accept a tree at $q$ which has an additional child $\tau_m$ rooted at $p_m$. Since all trees rooted at $p_m$ are accepted at states of the form $(Q,*)$ (the first subcase of 2 above), transition $\Delta'(q_1, ..., q_p, (q_m,*), \pi) = (q, q_m)$ is added for all $q_m \in Q$.
	\item otherwise, there exists a state $q_i \in \{q_1, ..., q_p\}$ such that $p_i \in \mathcal{T}(q_i)$ and the fact, that any node labelled with $p_i$ in the tree accepted at $q$ has as an additional child a tree $\tau_m$ rooted at $p_m$, is propagated recursively via the transitions $\Delta'(q_1, ..., (q_i, q_m), ..., q_p, \pi) = (q, q_m)$ for all $q_m \in Q$. 
	\end{itemize}

\item if $\{p_i, p_m\} \subseteq \mathcal{T}(q)$
	\begin{itemize}
	\item and $\pi \downarrow_{p_m} = 1$ (reagrdless of the value of $\pi \downarrow_{p_i}$) no transition is added to $\Delta'$, as any heap configuration accepted by the original automaton via this transition leads to a cycle on the execution of statement $stmt$.
	\item otherwise if $\pi \downarrow_{p_m} = 0$ and $\pi \downarrow_{p_i} = 1$, there will exist a state $q_m \in \{q_1, ..., q_p\}$ such that $p_m \in \mathcal{T}(q_m)$. The corresponding state in the new transition will be $(q_m, \hat{q})$ if $\hat{q}$ was the state of the original automaton which accepted the internal subtree $\tau_m$ rooted at $p_m$. Since $\pi \downarrow_{p_i} = 1$, an additional state $(\hat{q}, *)$ is added to the left hand side of the transition to ensure that the new automaton accepts the tree which has $\tau_m$ as an additional child to a node labelled with $p_i$. Formally, $\Delta'(q_1, ..., (q_m, \hat{q}),$ $...,$ $q_p, (\hat{q},*), \pi) = q$ for all $\hat{q} \in Q$.

	\item otherwise if $\pi \downarrow_{p_m} = \pi \downarrow_{p_i} = 0$ and there exists a state $q_{im} \in \{q_1, ..., q_p\}$ such that $\{p_i, p_m\} \subseteq \mathcal{T}(q_{im})$ then the transition remains unchanged i.e. $\Delta'(q_1,$ $...,$ $q_{im},$ $...,$ $q_p, \pi) =  q$.

	\item otherwise if $\pi \downarrow_{p_m} = \pi \downarrow_{p_i} = 0$ and there exist states $q_i, q_m \in \{q_1, ..., q_p\}$ such that $p_i \in \mathcal{T}(q_i)$ and $p_m \in \mathcal{T}(q_m)$, then the transition   $\Delta'(q_1,$ $...,$ $(q_i,\hat{q}),$ $...,$ $(q_m,\hat{q}),$ $...,$ $q_p, \pi) = q$ is added for all $\hat{q} \in Q$. Note that $(q_i, \hat{q})$ accepts a tree which has an additional child (accepted at $(\hat{q},*)$) at a node labelled with $p_i$ (explained in case 3 above) and $(q_m, \hat{q})$ accepts a tree where the internal subtree rooted at $p_m$ and accepted at $(\hat{q},*)$ has been removed (explained in the second subcase of 2 above). Note that if $q_m = \hat{q}$ then the state $(q_m, \hat{q})$ is removed from the left hand side of the transition $\Delta'$ i.e. $\Delta'(q_1,$ $...,$ $(q_i,\hat{q}),$ $...,$ $q_p, \pi) = q$.
    	\end{itemize}

\end{enumerate}

The final evaluation formula is unchanged for the states $Q \subseteq Q'$, it is \emph{false} for the newly added states i.e. $f'(q) = f(q), q \in Q$ and $f'(q,*) = f'(q_1, q_2) =$ \emph{false}. The type $\mathcal{T'}$ for the new automaton is defined as:\\
$\mathcal{T'}(q) = \begin{cases}
        \mathcal{T}(q) & \text{if $q \in Q$ }\\
	  \mathcal{T}(\hat{q})  & \text{if $q = (\hat{q},*), \hat{q} \in Q$ }\\
	  \mathcal{T}(q_1) \cup \mathcal{T}(q_2) & \text{if $q = (q_1, q_2),  p_i \in \mathcal{T}(q_1)$}\\
	  \mathcal{T}(q_1) \backslash \mathcal{T}(q_2) & \text{if $q = (q_1, q_2), p_m \in \mathcal{T}(q_1)$}\\
      \end{cases}
$\\
\newline\newline
{\bf Case 5} ($s = p_m \rightarrow\data := a$): On execution of this statement, the structure component of the data trees accepted by the automaton is unchanged; however the final evaluation function has to now record the fact that the value of the data pointed by variable $p_m$ is assigned  the value of $a$. If, for a particular valuation tree, $p_m$ is co-read with variable $v \in PV \cup Y$ before it is accepted at state $q$, $f'(q)$ should also record that  the data value pointed by variable $v$  is now assigned to $a$. So the new automaton needs to track the set of variables which are co-read with $p_m$ for a particular valuation tree. Hence $Q' = \{q~|~q \in Q, p_m \notin \mathcal{T}(q)\} \cup \{(q, S)~|~q \in Q, p_m \in \mathcal{T}(q), S \subseteq PV \cup Y\}$. Regarding the transition relation, for all transitions $\Delta(q_1, ..., q_p, \pi) = q$,
\begin{itemize}
\item if $p_m \notin \mathcal{T}(q)$, $\Delta'(q_1, ..., q_p, \pi) =  q$.
\item otherwise if $p_m \in \mathcal{T}(q)$ 
	\begin{itemize}
	\item and if $\pi \downarrow_{p_m} = 1$ then $\Delta'(q_1, ..., q_p, \pi) =  (q, S)$ where $\forall s \in S.~ \pi \downarrow_s = 1$. 
	\item otherwise if $\pi \downarrow_{p_m} = 0$ then there exists a state $q_m \in \{q_1, ..., q_p\}$ such that $p_m \in \mathcal{T}(q_m)$. Consequently, we add transitions $\Delta'(q_1, ..., (q_m, S), ... , q_p, \pi) = (q, S)$ for all $S \subseteq PV \cup Y$.
	\end{itemize}
\end{itemize}

The final evaluation function $f'$ is given as: $f'(q) = f(q)$ for all $q \in Q \cap Q'$; $f'(q, S) = \exists d. f(q)[v_1\rightarrow\data/d, \ldots, v_\ell\rightarrow\data/d] \sqcap (v_1\rightarrow\data = a) \sqcap \ldots \sqcap (v_\ell\rightarrow\data = a)$ where $S = \{v_1, ..., v_\ell\}$ and includes variable $p_m$. The type for each state of the new automaton is also same as the type in the original automaton i.e. $\mathcal{T'}(q) = \mathcal{T}(q)$ for all $q \in Q \cap Q'$; while $\mathcal{T}(q, S) = \mathcal{T}(q)$ for all $S$ in the remaining states.
\newline\newline
{\bf Case 6} ($s =$ \emph{assume} $(p_i = p_j)$ ): 
The output QSDA is obtained by removing from the input QSDA, transitions where variables $p_i$ and $p_j$ do not occur together. Formally, $Q' = Q, f' = f, \mathcal{T'} = \mathcal{T}$ and for all transitions $\Delta(q_1, ..., q_p, \pi) = q$ in the input QSDA, $\Delta'(q_1, ..., q_p, \pi) = q$ iff $\pi \downarrow_{p_i} = \pi \downarrow_{p_j}$. 
\newline\newline
{\bf Case 7} ($s =$ \emph{assume} $(p_i \neq p_j)$ ): 
The output QSDA is obtained by removing from the input QSDA, transitions where variables $p_i$ and $p_j$  occur together. Formally, $Q' = Q, f' = f, \mathcal{T'} = \mathcal{T}$ and for all transitions $\Delta'(q_1, ..., q_p, \pi) =  q$ in the input QSDA, $\Delta'(q_1, ..., q_p, \pi) = q$ iff $\pi \downarrow_{p_i} = 0$ or $\pi \downarrow_{p_j} = 0$ or both. 
\newline\newline
{\bf Case 8} ($s = $ \emph{assume} $\psi_{data}$ ): 
In this case, $Q' = Q$ and the transitions $\Delta'$ is same as $\Delta$.
The type 
$\mathcal{T'} = \mathcal{T}$ and for all $q \in Q, f'(q) = f(q) \sqcap \psi_{data}$.
\newline\newline
\noindent The transition relation $\Delta'$ thus constructed might need to be determinized to obtain $\mathcal{A'}$. 
For a symbolic tree $t$ and formulas $\varphi_1, \ldots, \varphi_j$ such that $(t, \varphi_1), \ldots, (t, \varphi_j)$ belong to the language, the determinization procedure maps $t$ to the formula $\varphi_1 \sqcup \ldots \sqcup \varphi_j$. The determinization procedure is similar to the powerset construction for determinizing tree automata; except for any set of states, the final evaluation function is now assigned to be the join of the formulas being mapped to the individual states in the set.

\end{document}